\DeclareMathAlphabet{\altmathcal}{OMS}{cmsy}{m}{n}
\newcommand{\norm}[2]{\ensuremath{\left\lVert {#1} \right\rVert_{#2}}}
\newcommand{\Prob}[2]{\underset{#1}{\ensuremath{\mathsf{Pr}}}\left[#2\right]}
\newcommand{\Expectation}[2]{\underset{#1}{\mathbb{E}} \left[ #2  \right]}
\newcommand{\E}[1]{\ensuremath{\mathbb{E}}\text{{$\left[#1\right]$}}}
\def\R{\mathbb{R}}
\def\BB{\altmathcal{B}}
\def\NN{\mathcal{N}}
\def\U{\altmathcal{U}}
\newcommand{\Sran}{\altmathcal{S}}
\newcommand{\cut}[1]{}
\newcommand{\Cov}[1]{\operatorname{Cov}\pr{#1}}
\newcommand{\Var}[1]{\operatorname{Var}\pr{#1}}
\newcommand{\bc}[1]{\left\{#1\right\}}
\newcolumntype{x}[1]{>{\centering\arraybackslash\hspace{0pt}}p{#1}}
\newcommand{\Ex}[2]{\underset{#1}{\mathbb{E}} \left[ #2  \right]}
\newcommand{\oneIn}{\mathbbm{1}}
\renewcommand{\l}{\lcurve}
\newcommand{\lcurve}{\ell}
\newcommand{\sbq}{\subseteq}
\newcommand{\spq}{\supseteq}
\def\h{h}
\newtheorem{claim}{Claim}
\def\vcomment#1{{\color{red}}}
\def\R{\mathbb{R}}
\newcommand{\eps}{\epsilon}
\newcommand{\br}[1]{\left[#1\right]}
\newcommand{\pr}[1]{\left(#1\right)}
\newcommand{\set}[1]{\ensuremath{\left\{#1\right\}}}
\newcommand{\an}[1]{\left\langle#1\right\rangle}
\newcommand{\tr}{\operatorname{tr}}
\newcommand{\argmin}{\operatorname*{argmin}}
\def\rank{\hbox{\rm rank}}
\theoremstyle{plain}
\newtheorem{theorem}{Theorem}[section]
\newtheorem{lemma}[theorem]{Lemma}
\theoremstyle{definition}
\newtheorem{definition}[theorem]{Definition}
\newtheorem{remark}[theorem]{Remark}
\begin{document}

\title{$\lambda$-Regularized $A$-Optimal Design and its Approximation by $\lambda$-Regularized Proportional Volume Sampling}
\author{Uthaipon (Tao) Tantipongpipat\thanks{Twitter. The work was done while the author was at Georgia Institute of Technology.}}
\date{\today}
\maketitle

\begin{abstract}
    In this work, we study the \textit{$\lambda$-regularized $A$-optimal design} problem  and introduce the \textit{$\lambda$-regularized proportional volume sampling} algorithm, generalized from  [Nikolov,  Singh, and Tantipongpipat, 2019], for this problem with the approximation guarantee that extends upon the previous work. In this problem, we are given vectors $v_1,\ldots,v_n\in\mathbb{R}^d$ in $d$ dimensions, a budget $k\leq n$,  and the regularizer parameter $\lambda\geq0$, and the  goal is to 
find a subset $S\subseteq [n]$ of size $k$ that minimizes the trace of $\left(\sum_{i\in S}v_iv_i^\top + \lambda I_d\right)^{-1}$
where $I_d$ is the $d\times d$ identity matrix.
The problem is motivated from optimal design in  ridge regression, where one tries to minimize the expected squared error of the ridge regression predictor from the true coefficient in the underlying linear model.
We introduce  $\lambda$-regularized proportional volume sampling and give its polynomial-time implementation to solve this problem. We show its  $(1+\frac{\epsilon}{\sqrt{1+\lambda'}})$-approximation for $k=\Omega\left(\frac d\epsilon+\frac{\log 1/\epsilon}{\epsilon^2}\right)$ where $\lambda'$ is proportional to $\lambda$, extending the previous bound in [Nikolov,  Singh, and Tantipongpipat, 2019] to the case $\lambda>0$ and obtaining asymptotic optimality as $\lambda\rightarrow \infty$. 

\end{abstract}


\section{Introduction}
Optimal design is a classical problem in statistics \cite{fedorov1972theory} with many applications from diversity sampling to machine learning. Optimal design has many different criteria, such as \textit{A,D,E,V}-optimality,   which correspond to different objectives to be optimized.  In this work, we focus in \(A\)-optimality. We refer the reader to \cite{nikolov2019proportional} and references therein for applications of optimal design and other optimality criteria. 

The problem of \(A\)-optimal design can be defined as follows. We are given \(n\) input vectors \(V=\set{v_1,\ldots,v_n}\) where \(v_i\in\R^d\) is in \(d\) dimensions and a budget \(k\leq n\),  and the goal is to find a subset \(S\sbq [n]\) of size \(k\) that minimizes the trace of $\pr{\sum_{i\in S}v_iv_i^\top }^{-1}$ (if \(V\) does not span full rank, we ignore the \(d-\rank(V)\) zero eigenvalues in calculating harmonic mean of the eigenvalues of
\(\sum_{i\in S}v_iv_i^\top\)). Approximation algorithms for \(A\)-optimal design include \(\frac{n-d+1}{k-d+1}\)-approximation by volume sampling \cite{avron2013faster}, $(1+\epsilon)$-approximation for $k=\Omega( \frac{d^2}{\epsilon})$ by a connection of optimal design with matrix sparsification \cite{wang2017computationally}, \((1+\eps)\)-approximation for \(k=\Omega(\frac d{\eps^2})\) by regret minimization \cite{allen2020near}, and \((1+\eps)\)-approximation for \(k= \Omega\left(\frac{d}{\epsilon^4}\right)\) and for \(k=\Omega\pr{\frac{d}{\eps^3} \log^2 \frac 1\eps}\) using a variant of local search and greedy algorithms \cite{madan2019combinatorial}. The best approximation known in the  regime with large \(k\)   is obtained by \cite{nikolov2019proportional} as follows.
\begin{theorem}[\cite{nikolov2019proportional}]
There exists a polynomial-time \((1+\eps)\)-approximation algorithm for \(A\)-optimal design problem for \(k=\Omega\pr{\frac d\eps+\frac{\log 1/\eps}{\eps^2}}\).
\end{theorem}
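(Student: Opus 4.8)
The plan is to pair a convex relaxation with randomized rounding, following the \emph{proportional volume sampling} paradigm. First relax the problem: over $x\in[0,1]^n$ with $\sum_i x_i=k$, minimize $\tr\pr{\pr{\sum_i x_i v_iv_i^\top}^{-1}}$. Since the trace of the inverse is convex on the positive-definite cone, this is an SDP and can be solved in polynomial time; let $x^\star$ be an optimum, $\bar M=\sum_i x_i^\star v_iv_i^\top$, and $\textup{OPT}_f\DEF\tr(\bar M^{-1})$, which lower-bounds the integral optimum. The rounding step samples $S\in\binom{[n]}{k}$ from the distribution $\Pr[S]\propto\mu(S)\det\pr{\sum_{i\in S}v_iv_i^\top}$, where $\mu$ is a hard-core measure $\mu(S)\propto\prod_{i\in S}z_i$ with activities $z_i$ tuned from $x^\star$ (say, so that the size-$k$ marginals of $\mu$ are a rescaling of $x^\star$); the determinantal reweighting is exactly what lets one beat naive independent rounding.

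The analysis hinges on the identity $\det(M)\,\tr(M^{-1})=E_{d-1}(M)$, where $E_{d-1}$ is the $(d-1)$st elementary symmetric polynomial of the eigenvalues of the $d\times d$ matrix $M$. Inserting it inside the expectation cancels the normalizing determinant, giving
\[
\Ex{S}{\tr\pr{\pr{\textstyle\sum_{i\in S}v_iv_i^\top}^{-1}}}=\frac{\sum_{S}\mu(S)\,E_{d-1}\pr{\sum_{i\in S}v_iv_i^\top}}{\sum_{S}\mu(S)\,\det\pr{\sum_{i\in S}v_iv_i^\top}}.
\]
Cauchy--Binet rewrites the denominator as $\sum_{S}\mu(S)\sum_{U\in\binom{S}{d}}\det(V_U)^2$ and the numerator as $\sum_{S}\mu(S)\sum_{W\in\binom{S}{d-1}}\det(V_W^\top V_W)$, where $V_U$ (resp.\ $V_W$) has columns $v_i$, $i\in U$ (resp.\ $i\in W$). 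Because $\mu$ is hard-core, summing over size-$k$ sets $S\supseteq U$ contributes a factor $e_{k-d}$ of the remaining activities, and summing over $S\supseteq W$ contributes a factor $e_{k-d+1}$; the result is a ratio of weighted sums of the ``fractional volumes'' $\det(V_U)^2\prod_{i\in U}z_i$ and $\det(V_W^\top V_W)\prod_{i\in W}z_i$, whose \emph{unweighted} ratio is exactly $E_{d-1}(\bar M)/\det(\bar M)=\textup{OPT}_f$.

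Thus everything reduces to bounding the ratio of the $e_{k-d+1}$-weighted numerator to the $e_{k-d}$-weighted denominator by $(1+\eps)\,\textup{OPT}_f$; if those weights were a common constant this would be immediate, so controlling their fluctuation is, I expect, the main obstacle. I would (i) split the coordinates of $x^\star$ into those very close to $1$, which can be placed in the output and deleted from the instance (lowering $d$ and $k$ together), and the rest, whose activities are then bounded away from $1$ so that the elementary symmetric polynomials concentrate; (ii) use Maclaurin/Newton-type inequalities to sandwich ratios $e_{k-d+1}(z_{-W})/e_{k-d}(z_{-U})$ in terms of $k$, $d$, and $\sum_i z_i$; and (iii) conclude that the multiplicative loss is of order $1+O(d/k)+O\pr{\sqrt{\log(1/\eps)/k}}$, which is at most $1+\eps$ precisely when $k=\Omega\pr{\frac d\eps+\frac{\log 1/\eps}{\eps^2}}$ --- the additive $d/\eps$ being forced by a $k/(k-d)$-type factor and the $\log(1/\eps)/\eps^2$ by the concentration estimate.

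Finally, proportional volume sampling must run in polynomial time despite a support of size $\binom nk$. The point is that the conditional marginals $\Pr[i\in S\mid\text{chosen so far}]$ are ratios of quantities of the form $\sum_{S}\mu(S)\det\pr{\sum_{i\in S}v_iv_i^\top}$ over restricted ground sets, and each such quantity has a closed form: expand the determinant by Cauchy--Binet and collect the hard-core weights into elementary symmetric polynomials, all evaluable with $\poly(n,d)$ arithmetic operations (e.g.\ via characteristic-polynomial coefficients). Sampling $S$ one element at a time with these marginals is then polynomial; combined with the ratio bound above, a standard Markov-inequality-plus-repetition step converting the bound in expectation into one with high probability (and the same marginals yielding a conditional-expectation derandomization), this gives the claimed polynomial-time $(1+\eps)$-approximation.
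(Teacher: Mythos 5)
Your overall architecture is the right one and is exactly how this theorem is proved in the cited work (and how the present paper re-derives it as the $\lambda=0$ case of Theorem~\ref{thm:reg-ratio-with-lambda}): solve the convex relaxation, sample $S$ with probability proportional to $\mu(S)\det\pr{V_SV_S^\top}$ for a hard-core $\mu$ built from $x^\star$, use $\det(M)\tr(M^{-1})=E_{d-1}(M)$ together with Cauchy--Binet to reduce the expected objective to ratios of the form $\Prob{\Sran\sim\mu}{\Sran\supseteq T}/\Prob{\Sran\sim\mu}{\Sran\supseteq R}$ with $|T|,|R|\le d$, and compute conditional marginals through elementary symmetric polynomials for the polynomial-time (and derandomizable) implementation. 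All of that is correct and matches the paper.

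The gap is in the one step you yourself flag as the main obstacle, and your plan (i)--(iii) does not close it. What is needed is the near-pairwise-independence inequality $\Prob{\Sran\sim\mu}{\Sran\supseteq T}/\Prob{\Sran\sim\mu}{\Sran\supseteq R}\le c\,\alpha^{|R|-|T|}\,x^T/x^R$ with $c\alpha\le 1+\eps$, and the device that delivers it is absent from your sketch: one rescales the fractional solution to $x/\beta$ with $\beta=1+\eps/4$ and takes $\mu$ hard-core on $\U_{\le k}$ (sets of size \emph{at most} $k$) with $z_i=x_i/(\beta-x_i)$, so that $\mu$ is precisely the independent product measure with marginals $x_i/\beta$ conditioned on $|\BB|\le k$. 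The numerator is then bounded by dropping the conditioning, the denominator costs only a factor $1/\Prob{}{|\BB\setminus R|\le k-|R|}$, and a Chernoff bound makes that factor $1/(1-\eps')$ once $(\beta-1)k\gtrsim \beta d+\sqrt{\beta k\log(1/\eps')}$ --- which is exactly where both the $d/\eps$ and the $\log(1/\eps)/\eps^2$ terms in $k$ originate (Lemma~\ref{lem:with-reg-mu-approx-ratio} and Claim~\ref{claim:k-condition-general-epsilon} with $\lambda=0$). Your alternative --- stay on exactly-size-$k$ sets and sandwich $e_{k-d+1}(z_{-W})/e_{k-d}(z_{-U})$ by Newton/Maclaurin --- faces ratios of elementary symmetric polynomials of different degrees over different ground sets with no slack parameter: with $z$ tuned so the expected size is exactly $k$, these ratios behave like point probabilities of a Bernoulli sum at its mean, and their fluctuation is not $1+O(d/k)$ but more like $1+O(d/\sqrt{k})$, which would force a quadratically worse $k$. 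The ``delete coordinates near $1$'' preprocessing does not repair this. So keep your skeleton, but replace steps (i)--(ii) with the $\beta$-rescaled hard-core measure over $\U_{\le k}$ and a Chernoff bound on $|\BB|$.
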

The result follows from solving the convex relaxation of \(A\)-optimal design and sampling a set with \textit{proportional volume sampling} based on the fractional solution obtained from the relaxation. \citet{nikolov2019proportional} show that approximation guarantee of \(A\)-optimal design follows from \(\emph{approximately independent}\) distribution and that a general class of \textit{hard-core} distributions is  approximately independent. Finally, they show that \textit{proportional volume sampling} can be efficiently implemented and is, indeed, a hard-core distribution, which conclude the proof of  the approximation.
 
In this work, we generalize this approach to the  \textit{\(\lambda\)-regularized \(A\)-optimal design} problem, where one aims to minimizes the trace of $\pr{\sum_{i\in S}v_iv_i^\top +\lambda I_d}^{-1}$ where \(I_d\) is the \(d\times d\) identity matrix. The problem is motivated from the use of ridge regression, a variant of linear regression with an \(\l_2\)-regularization penalty, to find the best linear estimator. We define \textit{near-pairwise independent} distributions, and show that they also include a general class of hard-core distributions, and that near-pairwise independence implies approximation guarantee for \(\lambda\)-regularized \(A\)-optimal design. Finally, we define \textit{\(\lambda\)-regularized proportional volume sampling} and show its near-pairwise independence property and its polynomial-time implementation. All of these results imply the approximation to \(\lambda\)-regularized \(A\)-optimal design, which is our main result and is stated as follows.

\begin{theorem}
There exists a polynomial-time \((1+\eps)\)-approximation algorithm for \(\lambda\)-regularized \(A\)-optimal design problem for \(k=\Omega\pr{\frac d\eps+\frac{\log 1/\eps}{\eps^2}}\).
In fact, the approximation ratio is \((1+\eps(\lambda))\) where \(\eps(0)=\eps\) and \(\eps(\lambda)\rightarrow0\) as \(\lambda\rightarrow\infty\). 
\end{theorem}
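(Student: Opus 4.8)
The plan is to assemble three ingredients in the manner of \cite{nikolov2019proportional}, carrying the regularizer through each step. I would begin from the natural convex relaxation of $\lambda$-regularized $A$-optimal design: minimize $f(x)=\tr\bigl((\sum_{i=1}^n x_i v_iv_i^\top+\lambda I_d)^{-1}\bigr)$ over $x\in[0,1]^n$ with $\sum_i x_i\le k$. Because $M\mapsto\tr(M^{-1})$ is convex and decreasing on positive definite matrices and $x\mapsto\sum_i x_i v_iv_i^\top+\lambda I_d$ is affine and — thanks to $\lambda I_d$ — always positive definite, $f$ is convex and finite on the whole polytope, so its minimum $\OPT_{\mathrm{rel}}$ is at most the integral optimum $\OPT$ and an $x^\star$ with $f(x^\star)\le(1+o(1))\,\OPT_{\mathrm{rel}}$ is computable in polynomial time (ellipsoid/interior point, or an equivalent SDP). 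It is convenient to also view $\lambda I_d=\sum_{j=1}^d(\sqrt\lambda e_j)(\sqrt\lambda e_j)^\top$ as $d$ auxiliary vectors that are forced into $S$; this both lets the unregularized bookkeeping be reused almost verbatim and explains the parameter $\lambda'$, which is $\lambda$ measured against the natural scale of $\{v_i\}$ (the rescaling under which the fractional solution has eigenvalues of order one), hence proportional to $\lambda$.

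Next I would define \emph{$\lambda$-regularized proportional volume sampling}: fix a hard-core measure $\mu$ on $\binom{[n]}{k}$ whose singleton marginals equal a suitable reweighting of $x^\star$, and draw $S$ with $\Pr[S]\propto\mu(S)\,\det\!\bigl(\sum_{i\in S}v_iv_i^\top+\lambda I_d\bigr)$. Two properties have to be established for this distribution. First, it is \emph{near-pairwise independent} in the sense defined earlier: its singleton and pair marginals lie within the stated multiplicative factors of those of $\mu$. I would derive this from the general statement (also to be proven) that determinantal reweightings of hard-core measures are near-pairwise independent — concretely, expand $\det(\sum_{i\in S}v_iv_i^\top+\lambda I_d)$ by Cauchy--Binet over the augmented family $\{v_i\}\cup\{\sqrt\lambda e_j\}$ and apply log-concavity and correlation inequalities for the resulting hard-core measure to control $\Pr[i\in S]$ and $\Pr[i,j\in S]$. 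Second, the distribution must be efficiently samplable; as in \cite{nikolov2019proportional} this reduces to evaluating, by dynamic programming over elementary symmetric polynomials of eigenvalues (equivalently, coefficients of characteristic polynomials), the normalizer $\sum_S\mu(S)\det(\sum_{i\in S}v_iv_i^\top+\lambda I_d)$ and its single-coordinate conditionals, and the extra $\lambda I_d$ merely shifts these polynomials without breaking the recursion.

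Finally I would combine the pieces. Writing $X=\sum_{i\in S}v_iv_i^\top+\lambda I_d$ for $S$ drawn from $\lambda$-regularized proportional volume sampling, near-pairwise independence yields
\[
\E{\tr(X^{-1})}\ \le\ \bigl(1+\eps(\lambda)\bigr)\,\tr\!\Bigl(\bigl(\textstyle\sum_{i} x_i^\star v_iv_i^\top+\lambda I_d\bigr)^{-1}\Bigr)\ \le\ \bigl(1+\eps(\lambda)\bigr)\,\OPT,
\]
with $\eps(\lambda)=\eps/\sqrt{1+\lambda'}$ whenever $k=\Omega(d/\eps+\log(1/\eps)/\eps^2)$; choosing $\eps$ appropriately then gives both the $(1+\eps)$-approximation and the limit $\eps(\lambda)\to0$ as $\lambda\to\infty$. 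I expect the last estimate to be the main obstacle: the approximation loss in the trace estimator comes entirely from the second-order (pairwise-correlation) terms produced by near-pairwise independence, and one must show quantitatively that adding $\lambda I_d$, which floors every eigenvalue of $X$ at $\lambda$, damps the relevant curvature/variance contribution by a factor $\Theta(1/\sqrt{1+\lambda'})$, while checking that this degrades to exactly the \cite{nikolov2019proportional} bound at $\lambda=0$; a secondary obstacle is verifying that the polynomial-time sampling argument is unaffected by the forced auxiliary vectors.
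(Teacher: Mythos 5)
Your overall architecture (convex relaxation, determinantal reweighting of a hard-core measure, a near-independence property, efficient sampling via symmetric-polynomial expansions) matches the paper's, and your device of viewing $\lambda I_d$ as $d$ forced auxiliary vectors $\sqrt{\lambda}\,e_j$ is a legitimate substitute for the identity $E_d(X+aI)=\sum_{i=0}^d E_i(X)\,a^{d-i}$ that the paper uses instead. But two steps as you describe them would not go through. First, the property you call near-pairwise independence --- control of the singleton and pair marginals of the reweighted measure --- is too weak. After Cauchy--Binet, $E_{d-1}(Z_S(\lambda))/E_d(Z_S(\lambda))$ becomes a ratio of sums of terms $\lambda^{d-1-h}\det(V_T^\top V_T)\Pr[S\supseteq T]$ over $|T|=h\le d-1$ against terms $\lambda^{d-\ell}\det(V_R^\top V_R)\Pr[S\supseteq R]$ over $|R|=\ell\le d$, and bounding the expectation requires the two-\emph{set} ratio bound $\Pr[S\supseteq T]/\Pr[S\supseteq R]\le c\,\alpha^{|R|-|T|}\,x^T/x^R$ for all pairs of subsets of size up to $d$ (that pair of sets is what ``pairwise'' refers to). This is proved by a Chernoff bound on the hard-core measure with $z_i=x_i/(\beta-x_i)$, not by matching marginals or by correlation inequalities; marginal control up to order two cannot constrain the degree-$d$ and degree-$(d-1)$ symmetric functions.

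Second, and more importantly, the $\epsilon/\sqrt{1+\lambda'}$ improvement does not come from the regularizer ``damping the variance'' of the trace estimator, and I do not see how to make that heuristic rigorous. In the paper the reduction theorem yields a $c\alpha$-approximation \emph{to the relaxation with regularizer $\alpha\lambda$ rather than $\lambda$} (the mismatch between $\lambda^{d-1-h}(\alpha\lambda)^{d-\ell}$ and $\lambda^{d-\ell}(\alpha\lambda)^{d-1-h}$ exactly absorbs the factor $\alpha^{|R|-|T|}$ from near-pairwise independence). The gain is then extracted by the separate elementary inequality $\frac{E_{d-1}(M+\beta\lambda I)}{E_d(M+\beta\lambda I)}\le\frac{1+\lambda'}{1+\beta\lambda'}\cdot\frac{E_{d-1}(M+\lambda I)}{E_d(M+\lambda I)}$ with $\lambda'=\lambda/\|M\|_2$, combined with the choice $\beta-1=\frac{\epsilon}{4}\sqrt{1+\lambda'}$ --- \emph{larger} than the unregularized $\epsilon/4$ --- so that the product $\frac{\beta}{1-\epsilon'}\cdot\frac{1+\lambda'}{1+\beta\lambda'}$ collapses to $1+\epsilon/\sqrt{1+\lambda'}$; the condition $k=\Omega(d/\epsilon+\log(1/\epsilon)/\epsilon^2)$ still suffices because enlarging $\beta$ only improves the Chernoff failure probability. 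Without this explicit trade-off between the inflated regularizer in the reduction and the choice of $\beta$, your final estimate has no mechanism to produce the claimed behavior $\eps(\lambda)\to 0$ as $\lambda\to\infty$.
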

The exact approximation ratio and constants in the bound of \(k\) can be found in Theorem \ref{thm:reg-ratio-with-lambda}. Our analysis follows similarly as the one in \cite{nikolov2019proportional}, which heavily involves elementary symmetric polynomials of eigenvalues of the matrix \(\sum_{i\in S}v_iv_i^\top\). The key idea in extending the previous results to \(\lambda\)-regularized \(A\)-optimal design is the fact that an elementary symmetric polynomial of eigenvalues of \(\sum_{i\in S}v_iv_i^\top+\lambda I_d\) are sums of  elementary symmetric polynomials of eigenvalues of \(\sum_{i\in S}v_iv_i^\top\). We then carefully group these polynomials and  bound each of those groups using similar but more complicated inequalities from \cite{nikolov2019proportional}.

\subsection{Related Work}
For related work to \(A\)-optimal design and its approximation algorithms, we refer the reader to \cite{nikolov2019proportional} and references therein. Here, we focus on work related to \(\lambda\)-regularized \(A\)-optimal design, when one uses ridge regression in place of linear regression to find a linear estimator in optimal design.

Ridge regression or regularized regression is introduced by \citet{hoerl1970ridge} to ensure a unique solution of linear regression when a data matrix is singular, i.e., when  the training data points do not span full \(d\) dimensions. Ridge regression has  been applied to many practical problems \cite{marquardt1975ridge} and is one of  classical linear methods for regression in machine learning \cite{hastie2009elements}.

\citet{derezinski2017subsampling} introduced \textit{\(\lambda\)-regularized  volume sampling}, and their results imply \(\frac {n}{k-d+1}\)-approximation for \(\lambda\)-regularized \(A\)-optimal design. The linear dependence on \(n\) in the approximation ratio is a result of their bound of $\tr\pr{\sum_{i\in S}v_iv_i^\top +\lambda I_d}^{-1}$ that compares to \(\tr\pr{\sum_{i\in[n]}v_iv_i^\top +\lambda I_d}^{-1}\) rather than to \( \tr\pr{\sum_{i\in S^*}v_iv_i^\top +\lambda I_d}^{-1}\) for an optimal \(S^*\subseteq [n]\) of the problem as in our work. We compare their result to ours in more details in Appendix \ref{sec:compare}. 

\subsection{Organization}
In Section \ref{sec:background}, we provide background on optimal design and the motivation and definition of the  {\(\lambda\)-regularized \(A\)-optimal design} problem. In Section \ref{sec:alg}, we describe our algorithm based on convex relaxation and \(\lambda\)-regularized proportional volume sampling. In Section \ref{sec:reduct}, we state near-pairwise independence property and prove its sufficiency to approximate \(\lambda\)-regularized \(A\)-optimal design. In Section \ref{sec:reg-prop-of-mu}, we show that \(\lambda\)-regularized proportional volume sampling is hard-core, and that hard-core distributions are near-pairwise independent. In Section \ref{sec:reg-combine-ratio}, we state and prove our main technical result, namely the approximation of \(\lambda\)-regularized \(A\)-optimal design. In Section 7, we show a polynomial-time implementation of \(\lambda\)-regularized proportional volume sampling. We note in Appendix \ref{sec:compare} the comparison of  \(\lambda\)-regularized  volume sampling \cite{derezinski2017subsampling,derezinski2018reverse} and our \(\lambda\)-regularized proportional volume sampling. Appendix \ref{sec:cal-error}  contains  derivations of formula deferred from the main body.

\section{Notation, Background, and Motivation of \(\lambda\)-Regularized \textit{A}-Optimal Design} \label{sec:background}
Let \(V=[v_1 \ldots v_n]\) be the \(d\)-by-\(n\) matrix of input vectors \(v_i\in \R^d\). We use the notation $x^S = \prod_{i \in S}x_i$, $V_S$ a matrix of column vectors $v_i\in \R^d$ for $i\in S$, and $V_S(x)$ a matrix of column vectors $\sqrt{x_i} v_i\in \R^d$ for $i\in S$. Let \(y\) be the label (or response) column vector, and \(y_S\) is the \(k\times 1\) column vector \((y_i)_{i \in S}\). Denote \(\U_k,\U_{\leq k}\) the sets of all subsets of \([n]\) of size \(k\) and at most \(k\), respectively. Let $e_k(x_1, \ldots, x_n)$ be the degree $k$
elementary symmetric polynomial in the variables $x_1, \ldots,
x_n$, i.e.,
$e_k(x_1, \ldots, x_n) = \sum_{S \in \U_k} x^S$.
By convention, $e_0(x) = 1$ for any $x$ {and $e_k(x_1, \ldots, x_n)=0$ for $k>n$}. For any positive
semi-definite $n\times n$ matrix $M$, we define $E_k(M)$ to be
$e_k(\lambda_1, \ldots, \lambda_n)$, where $\lambda(M)=(\lambda_1, \ldots,
\lambda_n)$ is the vector of eigenvalues of $M$. Denote \(I_n\) the identity matrix of dimension \(n\times n\), \(Z_S(\lambda)=V_SV_S^\top+\lambda I_d\), and \(\an{A,B}\) the dot product of two matrices \(A,B\) of the same dimension.
We denote \(\NN(\mu,\Sigma)\) the multi-variate Gaussian distribution with mean \(\mu\) and covariance \(\Sigma\).

Different optimality criteria of optimal design can be viewed as different scalarizations of the matrix  \(V_SV_S^\top\), such as the trace of the inverse as in \(A\)-design, or the determinant in \(D\)-design. One motivation on which we focus in this work for \(A\)-design is  the squared error of the estimator in linear model. In linear model, we assume that \(y_i = v_i^\top w^* + \eta_i\) where  \(\eta_i\)'s are independent Gaussian noise with mean zero and  variance \(\sigma^2\). 
We want to pick \(S\subseteq[n]\) to obtain labels \(y_S\) which provide as much information as possible to best estimate \(w^*\). 

\paragraph{Linear Regression.} 
One choice to estimate $w^*$ is by minimizing the sum of squared errors on the labeled samples:
\begin{equation}
\hat{w}_S= \argmin_{w\in\R^d}\bc{\norm{y_S-V_S^\top w}{2}^2}
\end{equation}
which is also called \textit{linear regression}. This estimate is also known to be the maximum likelihood estimate (with no prior). The expected squared error \(\Ex{\eta_S}{\norm{\hat{w}_S - w^*}{2}^2}\) of this estimator \(\hat{w}_S\) from \(w^*\) is \(\sigma^2\tr{(V_SV_S^\top)^{-1}}\) (see Appendix \ref{sec:cal-error} for its derivation). Hence, to get as useful predictor \(\hat w_S\) as possible, one can  minimize \(\tr{(V_SV_S^\top)^{-1}}\), which is a motivation to the \(A\)-design objective.

\paragraph{Ridge Regression.}
Suppose we estimate $w^*$ by minimizing the sum of squared errors on the labeled samples with an additional \(\ell_2\)-regularization parameter \(\lambda\):
\begin{equation}
\hat{w}_S(\lambda)= \argmin_{w\in\R^d}\bc{\norm{y_S-V_S^\top w}{2}^2+\lambda \norm{w}{2}^2}
\end{equation}
which is also called \textit{ridge regression}. Ridge regression with \(\lambda>0\) increases the stability the linear regression against the outlier, and forces the optimization problem to have a unique solution  when \(V\) does not span full-rank \(d\) which makes linear regression  ill-defined.
When \(\lambda=0\), the problem reverts to standard linear regression.
It is also known that 
\(\hat{w}_S(\lambda)\)
is the maximum likelihood estimate of linear model given the Gaussian prior \(w^*\sim \mathcal{N}(0,\frac{\sigma^2}{\lambda} \cdot I_d)\). 
The expected squared error of \(\hat{w}_S (\lambda)\) from  \(w^*\) is 

\begin{equation}
\Ex{\eta_S}{\norm{\hat{w}_S (\lambda)- w^*}{2}^2}= \sigma^2 \tr{ Z_S(\lambda) ^{-1}}
-\lambda\an{ Z_S(\lambda) ^{-2}, \sigma^2I_d-\lambda w^* {w^*}^\top}. \label{eq:ridge-error}
\end{equation} 

\begin{table}
\caption{Distributions of model (or predictor) and prediction errors of the ridge regression estimator \(\hat{w}_S(\lambda)\)}
\begin{center}
\begin{tabular}{|x{2cm}|x{5cm}|x{6cm}|}\hline
 Settings& \(\hat{w}_S(\lambda) - w^*\) & \(X^\top \pr{\hat{w}_S(\lambda) - w^*}\) \\\hline
\(\lambda=0\) & \(=\NN\pr{0,\sigma^2 \pr{V_S V_S^\top}^{-1}}\) & \(=\NN\pr{0,\sigma^2 X^\top \pr{V_SV_S^\top}^{-1} X}\) \\\hline
\(\lambda\geq 0\) &  \begin{tabular}{c}
\(=\NN(-\lambda Z_S(\lambda) ^{-1} w^*,\) \\
\( \sigma^2 \br{ Z_S(\lambda) ^{-1} - \lambda Z_S(\lambda) ^{-2} }) \) \
\end{tabular}
 &  \begin{tabular}{c}
\(=\NN(-\lambda X^\top  Z_S(\lambda) ^{-1} w^*,\) \\
\( \sigma^2 X^\top \br{ Z_S(\lambda) ^{-1} - \lambda Z_S(\lambda) ^{-2} } X) \) 
\end{tabular} \\\hline
\end{tabular}
\end{center}
\label{tab:gau-noise-reg-pred-err}
\end{table}
\begin{table}
\begin{center}
\caption{Expected squared error of model (or predictor)\ and prediction errors of the ridge regression estimator \(\hat{w}_S(\lambda)\)}
\label{tab:sqr-loss-reg-pred-err}
\begin{tabular}{|x{1.4cm}|x{5.7cm}|x{8.0cm}|}\hline
 Settings  & \(\Ex{\eta_S}{\norm{\hat{w}_S(\lambda) - w^*}{2}^2}\) & \(\Ex{\eta_S}{\norm{X^\top \pr{\hat{w}_S(\lambda) - w^*}}{2}^2} \) \\\hline
\(\lambda=0\) & \(=\sigma^2\tr{V_SV_S^\top}^{-1}\) & \(=\sigma^2\tr{ X^\top \pr{V_SV_S^\top}^{-1} X}\) \\\hline
\(\lambda\geq 0\) &  \begin{tabular}{c}
\(= \sigma^2 \tr{ Z_S(\lambda) ^{-1}}\) \\
\(-\lambda\an{ Z_S(\lambda) ^{-2}, \sigma^2I_d-\lambda w^* {w^*}^\top} \)
\end{tabular}
 &  \begin{tabular}{c}
\(=\sigma^2 \tr{ X^\top Z_S(\lambda) ^{-1}X}\) \\
\(  -\lambda\an{ Z_S(\lambda) ^{-1}X X^\top Z_S(\lambda)^{-1}   ,\sigma^2I_d-\lambda w^* {w^*}^\top}\)
\end{tabular} \\\hline
\end{tabular}
\end{center}

\end{table}

We summarize the distribution of the predictor or model error, \(\hat{w}_S (\lambda)- w^*,\) and the prediction error with respect to a data matrix \(X\) in \(d\) dimensions, \(X^\top \pr{\hat{w}_S(\lambda) - w^*}\), of the ridge regression estimate \(\hat{w}_S (\lambda)\) in Tables \ref{tab:gau-noise-reg-pred-err} and \ref{tab:sqr-loss-reg-pred-err}.
Some optimality criteria concern prediction error; for example, \(V\)-optimal design minimizes the expected squared norm of \(X^\top \pr{\hat{w}_S(\lambda) - w^*}\) with \(X=V\).
We note that in general, we may also assume \(\eta\) is a random Gaussian vector  \(\NN\pr{0,\Cov{\eta}}\)  with \(\Cov{\eta} \preceq \sigma^2 I_n\) (instead of \(\Cov{\eta} = \sigma^2 I_n\)), and the results in this work still hold; the errors to be minimized will be upper bounded by as if \(\eta\sim \NN\pr{0,\sigma^2 I_n}\). The derivation of Tables \ref{tab:gau-noise-reg-pred-err} and \ref{tab:sqr-loss-reg-pred-err} can be found in Appendix \ref{sec:cal-error}.

\paragraph{Bounding the Error of Ridge Regression Predictor.}

The challenge to upper-bound \eqref{eq:ridge-error} is the second-order term \(Z_S(\lambda) ^{-2}\). One way to address this is to consider only the first-order term \(\tr\pr{ Z_S(\lambda) ^{-1}}\). For example,  \citet{derezinski2017subsampling} assume that 
\( \lambda \leq \frac{\sigma^2}{\norm{w^*}{2}^2}\), which gives \(\lambda w^* {w^*}^\top \preceq \sigma^2 I\), and then we have
\begin{align}
\Ex{\eta_S}{\norm{\hat{w}_S(\lambda) - w^*}{2}^2}& \leq\sigma^2 \tr\pr{ Z_S(\lambda) ^{-1}}. \label{eq:model-error-final-bound}
\end{align} 
The right-hand side of \eqref{eq:model-error-final-bound} now contains only  the first-order term \(\tr\pr{ Z_S(\lambda) ^{-1}}\), which can be easier to optimize. For example,  results in \cite{derezinski2017subsampling,derezinski2018reverse}  imply an approximation for the objective \( \tr\pr{ Z_S(\lambda) ^{-1}}\).
To the best of our knowledge, it is an open question whether there is an approximation algorithm that directly bounds \(\Ex{\eta_S}{\norm{\hat{w}_S(\lambda) - w^*}{2}^2}\) without any assumption on \(\lambda\).

\subsection{\(\lambda\)-Regularized \(A\)-Optimal Design  }
The upper-bound \(\sigma^2 \tr\pr{ Z_S(\lambda) ^{-1}}\) of the expected squared predictor error in \eqref{eq:model-error-final-bound} is similar to the \textit{A}-optimal design objective \(\tr\pr{ V_SV_S^\top}^{-1}\) , and we follow \citet{derezinski2017subsampling} in using it as an objective to be optimized.
In particular, we define the \textit{\(\lambda\)-regularized \(A\)-optimal design} problem as,  given input vectors \(V=[v_1 \ldots v_n]\in \R^{d \times n}\) in \(d\) dimensions, positive integer \(k\), and \(\lambda \geq0\), we find a subset \(S\subseteq[n]\) of size \(k\) to minimize
\begin{align}
\min_{S\sbq [n], |S|=k} \tr{ \pr{V_SV_S^\top+\lambda I_d} ^{-1}}. \label{eq:obj-reg}
\end{align}
\paragraph{\(\lambda\)-regularized Generalized Ratio Objective.} Similar to the generalized ratio objective in \cite{nikolov2019proportional}, we can also  define its \(\lambda\)-regularized counterpart. The generalized ratio objective is the ratio of elementary symmetric polynomials of eigenvalues of \(V_SV_S^\top\), which captures both \(A\)- and \(D\)-design problems. Given \(0\leq l'\leq l\leq d\), the goal is to choose a subset \(S\subseteq[n]\) of size \(k\) to minimize
\begin{equation}
\min_{S\subseteq [n],|S|=k} \left(\frac{E_{l'}(V_S V_S^\top)}{E_l(V_S V_S^\top)}\right)^{\frac{1}{l-l'}} \label{eq:objective-gen}
.\end{equation}
Hence, one can also define \(\lambda\)-regularized generalized ratio objective as 

\begin{equation}
\min_{S\subseteq [n],|S|=k} \left(\frac{E_{l'}(V_S V_S^\top+\lambda I_d)}{E_l(V_S V_S^\top+\lambda I_d)}\right)^{\frac{1}{l-l'}}
.\end{equation}
\section{\(\lambda\)-Regularized Proportional Volume Sampling Algorithm} \label{sec:alg}

Recall that we denote \(\U_k\) (\(\U_{\leq k}\)) the set of all subsets \(S\sbq [n]\) of size \(k\) (of size \(\leq k\)). Given \(\lambda \geq 0, y\in \R^n,\U\in\set{\U_k,\U_{\leq k}}\), and \(\mu\) a distribution over \(\U\), we define the \textit{\(\lambda\)-regularized proportional volume sampling with measure \(\mu\)} to be the distribution \(\mu'\) over \(\U\) where \(\mu'(S)\propto \mu(S) \det Z_S(\lambda) \) for all \(S\in \U\). Given \(y\in \R^n\), we say a distribution \(\mu\) over \(\U\) is \textit{hard-core with parameter \(z\) }if \(\mu(S) \propto z^S:=\prod_{i \in S} z_i\) for all \(S\in \U\). Denote \(\norm{A}{2}\) the spectral norm of matrix \(A\). 

To solve \(\lambda\)-regularized \(A\)-optimal design, we solve the convex relaxation of the optimization problem, namely
\begin{align}
&\min_{x\in\R^n} \frac{E_{d-1} (V(x)V(x)^\top+\lambda I) }{E_{d} (V(x)V(x)^\top+\lambda I)  } \text{ subject to} \label{eq:reg-relax-obj} \\
&\sum_{i=1}^n x_i = k, \\
&1 \geq x_i \geq 0 \label{eq:reg-relax-con}
\end{align}
 where \(V(x):=[\sqrt{x_1}v_1 \ldots \sqrt{x_n} v_n]\), to get a fractional solution \(x\in \R^n\). Note that convexity follows from the convexity of function \(\frac{E_{d-1}(M)}{E_d(M)}\) over the set of all PSD matrices \(M\in\R^{n \times n}\). Then, we sample a set \(S\) by \(\lambda\)-regularized proportional volume sampling with hard-core measure \(\mu\), where the parameter \(z\in\R^n\) of the measure \(\mu\) depends on the fractional solution \(x\). The summary of the algorithm is in Algorithm \ref{alg:generalSampleNearInd}. We choose \(z\) in such a way to obtained the desired approximation result. The approximation and motivation to how we set \(z\) can be found in Section \ref{sec:reg-combine-ratio}.

\begin{algorithm}
\caption{Solving \(\min_{S\sbq [n], |S|=k} \frac{E_{d-1} Z_S(\lambda) }{E_{d} Z_S(\lambda) }\) with convex relaxation and \(\lambda\)-regularized proportional volume sampling}\label{alg:generalSampleNearInd}
\begin{algorithmic}[1]
\State Given an input $V=[v_1,\ldots,v_n]$ where $v_i\in \R^d$, $k$ a positive integer, \(\lambda\geq 0.\)
\State Solve the convex relaxation to get a  solution \(x \in \argmin_{x\in[0,1]^n,1^\top x =k}\frac{E_{d-1}\pr{V(x) V(x)^\top + \lambda I}}{E_{d}\pr{V(x) V(x)^\top +  \lambda I}}\). 
\State Let \(z_i = \frac{x_i}{\beta-x_i}\) where \(\beta=1+\frac{\epsilon}{4} \sqrt{1+\frac{\lambda}{\norm{V(x)V(x)}{2}}}\).
\State Sample \(\Sran\) from \(\mu'(S)\propto z^S \det Z_S(\lambda) \) for each \(S \in \U_{\leq k}\).

\State Output $\Sran$ (If $|\Sran| < k$, add $k - |\Sran|$ arbitrary vectors to $\Sran$ first).
\end{algorithmic}
\end{algorithm}

\cut{
The overall goal is to show that with \(k=\Omega\pr{\frac{d}{\epsilon}+\frac{\log(1/\epsilon)}{\epsilon^2}}\), Algorithm \ref{alg:generalSampleNearInd} has \((1+\frac{\epsilon }{\sqrt{1+\frac{\lambda}{\norm{V(x)V(x)}{2}}}}\))-approximation guarantee to solving \(\lambda\)-regularized \(A\)-optimal design.

\begin{theorem} \label{thm:reg}
Given  \(V=[v_1 \ldots v_n]\in \R^{d \times n}\), integer \(k\geq d\), and \(\lambda\in \R^+\), Algorithm \ref{alg:generalSampleNearInd} has \((1+\epsilon\))-approximation guarantee to solving \(\lambda\)-regularized \(A\)-optimal design.
\end{theorem}
We note that the approximation ratio is in fact a slightly tighter factor \(1+\frac{\epsilon }{\sqrt{1+\frac{\lambda}{\norm{V(x)V(x)}{2}}}}\), as will be shown later in this section. This ratio shows that the algorithm's performance improves as \(\lambda\) increases, and is asymptotically optimal as \(\lambda\rightarrow \infty\).

The proof of Theorem \ref{thm:reg} relies on showing that proving an approximation guarantee of a  \(\lambda\)-regularized proportional volume sampling with  measure \(\mu\) reduces to showing a property on \(\mu\) which we called \textit{near-pairwise independence}. This reduction is explained in Theorem \ref{thm:total-ind-to-approx-with-reg}. We then construct \(\mu\) based on fractional solution \(x\) and prove that \(\mu\) has such property in Section \ref{sec:reg-prop-of-mu}. Finally, we note that our constructed \(\mu\) is hardcore, and show that we can efficiently implement \(\lambda\)-regularized proportional volume sampling with any \textit{hard-core} measure \(\mu\).   
}


\section{Reduction of Approxibility to Near-Pairwise Independence} \label{sec:reduct}
 In this section, we show that an approximation guarantee of   \(\lambda\)-regularized proportional volume sampling with  measure \(\mu\)  reduces to showing a property on \(\mu\) which we called \textit{near-pairwise independence}, stated formally in Theorem \ref{thm:total-ind-to-approx-with-reg}. We first define \textit{near-pairwise independence} of  a distribution.
\begin{definition}
Let \(\mu\) be a distribution on \(\U \in \{\U_k,\U_{\leq k}\}\). Let \(x\in\R_+^n\). We say \(\mu\) is \textit{(\(c,\alpha\))-near-pairwise independent} with respect to \(x\) if
for all \(T,R\subseteq [n]\) each of size at most \(d\), 
\begin{equation}
\frac{\Prob{\Sran\sim\mu}{S \supseteq T}}{\Prob{\Sran\sim\mu}{S \supseteq R}} \leq c\alpha^{|R|-|T|} \frac{x^T}{x^R} \label{eq:mu-assumption}
\end{equation}
\end{definition}

We omit the phrase "with respect to \(x\)" when the context is clear. Before we prove the main result, we make some calculation which will be used later.
\begin{lemma} \label{lem:expand-det-(d-1)-regularizer}
For any  PSD matrix \(X\in\R^{d\times d}\) and \(a\in\R\),
\begin{align}
E_d\pr{X+aI}=\sum_{i=0}^d E_i(X)a^{d-i}
\end{align}
and
\begin{align}
E_{d-1}\pr{X+aI}=\sum_{i=0}^{d-1}(d-i)E_i(X)a^{d-1-i}
\end{align}
\end{lemma}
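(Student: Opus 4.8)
The plan is to pass to eigenvalues and reduce both identities to elementary facts about symmetric functions. Write $\lambda(X)=(\lambda_1,\ldots,\lambda_d)$ for the eigenvalues of $X$ (real, since $X$ is symmetric PSD). Because $X+aI$ has eigenvalues $\lambda_1+a,\ldots,\lambda_d+a$, we have $E_d\pr{X+aI}=e_d(\lambda_1+a,\ldots,\lambda_d+a)=\prod_{j=1}^d(\lambda_j+a)$ and $E_{d-1}\pr{X+aI}=e_{d-1}(\lambda_1+a,\ldots,\lambda_d+a)$. So it suffices to show the two purely polynomial identities $\prod_{j=1}^d(\lambda_j+a)=\sum_{i=0}^d e_i(\lambda)\,a^{d-i}$ and $e_{d-1}(\lambda_1+a,\ldots,\lambda_d+a)=\sum_{i=0}^{d-1}(d-i)\,e_i(\lambda)\,a^{d-1-i}$.

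For the first identity, expand $\prod_{j=1}^d(\lambda_j+a)$ by selecting, from each factor, either $\lambda_j$ or $a$; the terms in which $\lambda_j$ is selected exactly on a subset $S\subseteq[d]$ contribute $\pr{\prod_{j\in S}\lambda_j}a^{d-|S|}$, and grouping by $i=|S|$ yields $\sum_{i=0}^d e_i(\lambda)\,a^{d-i}$ — this is just the standard generating identity for the $e_i$. For the second identity I would introduce the univariate polynomial $P(s):=\prod_{j=1}^d(s+\lambda_j)=\sum_{i=0}^d e_i(\lambda)\,s^{d-i}$ (the last equality being the first identity with $a$ replaced by $s$). The key observation is that $e_{d-1}(\mu_1,\ldots,\mu_d)=\sum_{k=1}^d\prod_{l\neq k}\mu_l=\frac{d}{ds}\big|_{s=0}\prod_{k=1}^d(s+\mu_k)$. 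Applying this with $\mu_k=\lambda_k+a$ and noting $\prod_{k=1}^d(s+a+\lambda_k)=P(s+a)$ gives $e_{d-1}(\lambda_1+a,\ldots,\lambda_d+a)=P'(a)$; differentiating $P(s)=\sum_{i=0}^d e_i(\lambda)\,s^{d-i}$ termwise gives $P'(a)=\sum_{i=0}^d (d-i)e_i(\lambda)\,a^{d-1-i}=\sum_{i=0}^{d-1}(d-i)e_i(\lambda)\,a^{d-1-i}$, the $i=d$ summand vanishing since its coefficient $d-i$ is zero. Recalling $E_i(X)=e_i(\lambda)$ finishes the proof.

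There is no genuine obstacle here; the statement is a routine fact about elementary symmetric polynomials. The only point needing a little care is the combinatorial bookkeeping in the second identity — correctly pinning down the multiplicity $d-i$ with which each $e_i(X)$ term occurs — which the derivative formulation above makes transparent. (As an alternative to the derivative argument, one can derive the second identity by the same direct expansion used for the first: write $e_{d-1}(\mu)=\sum_k\prod_{l\neq k}\mu_l$, expand each inner product over subsets of $[d]\setminus\{k\}$, and observe that a fixed subset $S$ with $|S|=i$ is counted once for every $k\notin S$, i.e.\ exactly $d-i$ times.)
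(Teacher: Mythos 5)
Your proof is correct. For the first identity you do exactly what the paper does: pass to eigenvalues and expand $\prod_j(\lambda_j+a)$ over subsets. For the second identity your primary argument is a mild repackaging of the paper's: the paper writes $E_{d-1}(X+aI)=\sum_{j}\prod_{i\neq j}(\lambda_i+a)$, expands each product via the first identity, and then counts that each monomial of $e_i(\lambda)$ appears with multiplicity $d-i$; you instead observe that $e_{d-1}(\mu_1,\ldots,\mu_d)=\frac{d}{ds}\big|_{s=0}\prod_k(s+\mu_k)$, so the left-hand side is $P'(a)$ for $P(t)=\prod_j(t+\lambda_j)=\sum_i e_i(\lambda)t^{d-i}$, and termwise differentiation produces the factor $d-i$ automatically. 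The derivative formulation buys you a cleaner derivation of the multiplicity (no combinatorial counting needed), at the cost of introducing a polynomial identity in an auxiliary variable; the paper's counting argument is more elementary and self-contained. Your parenthetical alternative --- expanding $\sum_k\prod_{l\neq k}\mu_l$ over subsets and noting that a fixed $S$ with $|S|=i$ is counted once for each $k\notin S$ --- is verbatim the paper's proof. Either route is complete and correct.
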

\begin{proof}
Let \(\lambda_1,\ldots,\lambda_d\) be eigenvalues of \(X\). Then
we have\begin{align*}
E_d\pr{X+aI}=\prod_{i=1}^d(\lambda_i + a) = \sum_{i=0}^d e_i(\lambda)a^{d-i}= \sum_{i=0}^d E_i(X)a^{d-i}
\end{align*}
which proves the first equality. Next, we have
\begin{align*}
E_{d-1}\pr{X+aI}&=\sum_{j=1}^d\prod_{i\in[d],i\neq j}(\lambda_i + a)  \\
&= \sum_{j=1}^d \sum_{i=0}^{d-1} e_{i}(\lambda_{-j})a^{d-1-i} =\sum_{i=0}^{d-1} \pr{ \sum_{j=1}^d e_{i}(\lambda_{-j})}a^{d-1-i}
\end{align*}
where \(\lambda_{-j}\) is \(\lambda\) with one element \(\lambda_j\) deleted. For each fixed \(i\in\set{0,\ldots,d-1}\), we have 
\begin{equation}
 \sum_{j=1}^d  e_{i}(\lambda_{-j})=(d-i)e_i(\lambda)
\end{equation}
by counting the number of each monomial in \(e_i(\lambda)\). Noting that \(e_i(\lambda)=E_i(X)\), we finish the proof. 
\end{proof} 

Now we are ready to state and prove the main result in this section.
\begin{theorem} \label{thm:total-ind-to-approx-with-reg}
Let \(x\in[0,1]^n\). Let \(\mu\) be a distribution on \(\U \in \{\U_k,\U_{\leq k}\}\) that is (\(c,\alpha\))-near-pairwise independent. Then the \(\lambda\)-regularized proportional volume sampling  \(\mu'\) with measure \(\mu\) satisfies \begin{equation}
\Ex{\Sran\sim\mu'}{\frac{E_{d-1}\pr{Z_\Sran(\lambda)}}{E_{d}\pr{Z_\Sran(\lambda)}}} \leq c\alpha \frac{E_{d-1}\pr{V(x) V(x)^\top + \alpha\lambda I}}{E_{d}\pr{V(x) V(x)^\top + \alpha \lambda I}} \label{eq:alg-ratio-to-convex-relaxation}.
\end{equation}
That is, the sampling gives \(c\alpha\)-approximation guarantee to \(\alpha\lambda\)-regularized \(A\)-optimal design in expectation.
\end{theorem}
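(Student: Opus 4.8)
The plan is to convert the left-hand side of \eqref{eq:alg-ratio-to-convex-relaxation} into a ratio of two explicit sums, expand each side into elementary symmetric polynomials of the Gram matrices of subsets (using Lemma~\ref{lem:expand-det-(d-1)-regularizer} and Cauchy--Binet), and then obtain the bound by a term-by-term comparison that is exactly what near-pairwise independence delivers. Since $Z_S(\lambda)$ is $d\times d$ we have $E_d(Z_S(\lambda))=\det Z_S(\lambda)$, so $\mu'(S)\propto\mu(S)E_d(Z_S(\lambda))$ and hence
\[
\Ex{\Sran\sim\mu'}{\frac{E_{d-1}(Z_\Sran(\lambda))}{E_{d}(Z_\Sran(\lambda))}}=\frac{\sum_{S}\mu(S)\,E_{d-1}(Z_S(\lambda))}{\sum_{S}\mu(S)\,E_{d}(Z_S(\lambda))},
\]
with positive denominator (immediate for $\lambda>0$; the $\lambda=0$ case is as in \cite{nikolov2019proportional}). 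Writing $Q_j:=E_j(V(x)V(x)^\top+\alpha\lambda I)$, it then suffices to show $\big(\sum_S\mu(S)E_{d-1}(Z_S(\lambda))\big)Q_d\le c\alpha\big(\sum_S\mu(S)E_{d}(Z_S(\lambda))\big)Q_{d-1}$.

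Next I would expand everything. By Lemma~\ref{lem:expand-det-(d-1)-regularizer} applied with $X=V_SV_S^\top,a=\lambda$ (and with $X=V(x)V(x)^\top,a=\alpha\lambda$), together with the Cauchy--Binet identities $E_i(V_SV_S^\top)=\sum_{T\in\binom{S}{i}}d_T$ and $E_i(V(x)V(x)^\top)=\sum_{|T|=i}x^Td_T$ where $d_T:=\det(V_T^\top V_T)\ge0$ (and $d_T=0$ for $|T|>d$), all four quantities become polynomials in $\lambda$ whose coefficients are assembled from the nonnegative numbers $f_i:=\sum_{|T|=i}d_T\Prob{\Sran\sim\mu}{S\supseteq T}$ and $g_j:=\sum_{|R|=j}x^Rd_R$. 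The hypothesis enters only through the inequality
\[
f_ig_j=\sum_{|T|=i,|R|=j}d_Td_R\,\Prob{\Sran\sim\mu}{S\supseteq T}\,x^R\le c\alpha^{\,j-i}\sum_{|T|=i,|R|=j}d_Td_R\,x^T\,\Prob{\Sran\sim\mu}{S\supseteq R}=c\alpha^{\,j-i}f_jg_i,
\]
valid for all $0\le i,j\le d$ by applying \eqref{eq:mu-assumption} to each pair $(T,R)$ (the case $x^R=0$ being trivial).

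To finish, I would write both sides of the "suffices" inequality as double sums: on the left a generic term is $(d-i)\alpha^{\,d-j}\lambda^{\,2d-1-i-j}f_ig_j$ with $0\le i\le d-1,\,0\le j\le d$, and on the right a generic term is $c(d-j')\alpha^{\,d-j'}\lambda^{\,2d-1-i'-j'}f_{i'}g_{j'}$ with $0\le i'\le d,\,0\le j'\le d-1$. Bounding the left term via the key inequality gives $c(d-i)\alpha^{\,d-i}\lambda^{\,2d-1-i-j}f_jg_i$, which is exactly the right term at $(i',j')=(j,i)$; since $(i,j)\mapsto(j,i)$ is a bijection between the two index sets (it preserves $i+j$, hence the power of $\lambda$) and $\lambda\ge0$, summing yields the bound. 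The closing assertion of the theorem then follows from $E_{d-1}(M)/E_d(M)=\tr(M^{-1})$ and the fact that evaluating $Q_{d-1}/Q_d$ at the indicator vector of an optimal size-$k$ set lower-bounds the $\alpha\lambda$-regularized $A$-optimal value. I expect the only genuine work to be this last bookkeeping step — making the powers of $\alpha$, the combinatorial weights $d-i$ coming from Lemma~\ref{lem:expand-det-(d-1)-regularizer}, and the index ranges line up so that the swap $(i,j)\leftrightarrow(j,i)$ produces precisely the factor $c\alpha$; everything else is routine.
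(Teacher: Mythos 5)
Your proposal is correct and follows essentially the same route as the paper's proof: expand both sides via Lemma~\ref{lem:expand-det-(d-1)-regularizer} and Cauchy--Binet, cross-multiply, and compare terms pairwise using \eqref{eq:mu-assumption}, with the powers of $\alpha$ cancelling to leave exactly $c\alpha$. The only (immaterial) difference is that you first aggregate over sets of a given size into the quantities $f_i,g_j$ and pair indices via the swap $(i,j)\leftrightarrow(j,i)$, whereas the paper performs the identical comparison directly on the quadruple sum over $(h,T,\ell,R)$.
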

Note that by \(\frac{E_{d-1}\pr{V(x) V(x)^\top + \alpha\lambda I}}{E_{d}\pr{V(x) V(x)^\top + \alpha \lambda I}} \leq \frac{E_{d-1}\pr{V(x) V(x)^\top + \lambda I}}{E_{d}\pr{V(x) V(x)^\top + \lambda I}}\), \eqref{eq:alg-ratio-to-convex-relaxation} also implies \(c\alpha\)-approximation guarantee to the original \(\lambda\)-regularized \(A\)-optimal design. However, we can exploit the gap of these two quantities to get a better approximation ratio which converges to 1 as \(\lambda \rightarrow \infty\). This is done formally in Section \ref{sec:reg-combine-ratio}. 

\begin{proof}
We apply Lemma \ref{lem:expand-det-(d-1)-regularizer} to RHS of \eqref{eq:alg-ratio-to-convex-relaxation} to get
\begin{align*}
\frac{E_{d-1}\pr{V(x) V(x)^\top + \alpha\lambda I}}{E_{d}\pr{V(x) V(x)^\top + \alpha \lambda I}} &= \frac{\sum_{h=0}^{d-1}(d-h)E_h(V(x) V(x)^\top)(\alpha \lambda)^{d-1-h}}{ \sum_{\l=0}^dE_\l(V(x) V(x)^\top)(\alpha \lambda)^{d-\l}}\\ 
&=\frac{\sum_{\h=0}^{d-1}\sum_{|T|=\h} (d-\h) (\alpha \lambda )^{d-1-\h} x^T \det\pr{V_T^\top V_T} }{\sum_{\l=0}^d \sum_{|R|=\l} (\alpha \lambda )^{d-\l} x^R \det\pr{V_R^\top V_R}}
\end{align*}
where we apply Cauchy-Binet to the last equality. Next, we apply Lemma \ref{lem:expand-det-(d-1)-regularizer} to LHS of \eqref{eq:alg-ratio-to-convex-relaxation} to get

\begin{align*}
\Ex{\Sran\sim\mu'}{\frac{E_{d-1}\pr{Z_\Sran(\lambda)}}{E_{d}\pr{Z_\Sran(\lambda)}}} &=\frac{\sum_{S\in\U} \mu(S) E_{d} (Z_S(\lambda))\frac{E_{d-1}\pr{Z_\Sran(\lambda)}}{E_{d}\pr{Z_\Sran(\lambda)}} }{\sum_{S\in\U} \mu(S) E_{d}  Z_S(\lambda) } = \frac{\sum_{S\in\U} \mu(S) E_{d-1}  Z_S(\lambda) }{\sum_{S\in\U} \mu(S) E_{d}  Z_S(\lambda) } \\
&= \frac{\sum_{S\in\U} \mu(S)\sum_{h=0}^{d-1}(d-h)E_h(V_S V_S^\top) \lambda^{d-1-h}}{ \sum_{S\in\U} \mu(S)\sum_{\l=0}^dE_\l(V_S V_S^\top)\lambda^{d-\l}} \\
&= \frac{\sum_{S\in\U} \mu(S) \sum_{\h=0}^{d-1}\sum_{|T|=h,T\subseteq S} (d-\h) \lambda^{d-1-\h} \det\pr{V_T^\top V_T}}{\sum_{S\in\U} \mu(S) \sum_{\l=0}^{d}\sum_{|R|=\l,R\subseteq S}  \lambda^{d-\l}  \det\pr{V_R^\top V_R}} \\
&= \frac{\sum_{\h=0}^{d-1} \sum_{|T|=\h} \sum_{S\in\U,S\supseteq T} \mu(S)(d-\h)\lambda^{d-1-\h} \det\pr{V_T^\top V_T}}{\sum_{\l=0}^{d} \sum_{|R|=\l} \sum_{S\in\U,S\supseteq R} \mu(S)\lambda^{d-\l} \det\pr{V_R^\top V_R}} \\
&= \frac{\sum_{\h=0}^{d-1} \sum_{|T|=\h} (d-\h)\lambda^{d-1-\h} \det\pr{V_T^\top V_T}\Prob{\Sran\sim\mu}{\Sran \supseteq T}}{\sum_{\l=0}^{d} \sum_{|R|=\l} \lambda^{d-\l} \det\pr{V_R^\top V_R}\Prob{\Sran\sim\mu}{\Sran \supseteq R}}.
\end{align*}
Therefore, by cross-multiplying the numerator and denominator, the ratio  \(\frac{\Ex{\Sran\sim\mu'}{\frac{E_{d-1}\pr{Z_\Sran(\lambda)}}{E_{d}\pr{Z_\Sran(\lambda)}}} }{\frac{E_{d-1}\pr{V(x) V(x)^\top + \alpha\lambda I}}{E_{d}\pr{V(x) V(x)^\top + \alpha \lambda I}}}\) equals to
\begin{align*}
& \frac{\sum_{\h=0}^{d-1}\sum_{|T|=\h}\sum_{\l=0}^d\sum_{|R|=\l} (d-\h) \det\pr{V_T^\top V_T} \det\pr{V_R V_R^\top} \lambda^{d-1-\h} (\alpha \lambda)^{d-\l} x^R\Prob{\mu}{\Sran \spq T}}{\sum_{\h=0}^{d-1}\sum_{|T|=\h}\sum_{\l=0}^d\sum_{|R|=\l} (d-\h) \det\pr{V_T^\top V_T} \det\pr{V_R V_R^\top} \lambda^{d-\l} (\alpha \lambda)^{d-1-\h} x^T \Prob{\mu}{\Sran \spq R}}.
\end{align*}
For each fixed \(\h,T,\l,R\), we want to upper bound \(\frac{\lambda^{d-1-\h} (\alpha \lambda)^{d-\l} x^R \Prob{\mu}{\Sran \spq T}}{\lambda^{d-\l} (\alpha \lambda)^{d-1-\h} x^T \Prob{\mu}{\Sran \spq R}}\). By the definition of near-pairwise independence \eqref{eq:mu-assumption},
\begin{align}
\frac{\lambda^{d-1-\h} (\alpha \lambda)^{d-\l} x^R \Prob{\mu}{\Sran \spq T}}{\lambda^{d-\l} (\alpha \lambda)^{d-1-\h} x^T \Prob{\mu}{\Sran \spq R}} &\leq  \frac{\lambda^{d-1-\h} (\alpha \lambda)^{d-\l} }{\lambda^{d-\l} (\alpha \lambda)^{d-1-\h} } c\alpha^{\l-\h} \\
&= \alpha^{\h-\l+1} \cdot c\alpha^{\l-\h} = c\alpha
\end{align}
Therefore, the ratio \(\frac{\Ex{\Sran\sim\mu'}{\frac{E_{d-1}\pr{Z_\Sran(\lambda)}}{E_{d}\pr{Z_\Sran(\lambda)}}} }{\frac{E_{d-1}\pr{V(x) V(x)^\top + \alpha\lambda I}}{E_{d}\pr{V(x) V(x)^\top + \alpha \lambda I}}}\) is also bounded above by \(c\alpha\).
\end{proof}

\section{Constructing a  Near-Pairwise-Independent  Distribution} \label{sec:reg-prop-of-mu}
In this section, we want to construct a distribution \(\mu\) on \(\U_{\leq k}\) and prove its (\(c,\alpha\))-near-pairwise-independence.
Our proposed \(\mu\) is hard-core with parameter \(z\in\R^n\) defined by \(z_i := \frac{x_i}{\beta-x_i}\)
(coordinate-wise) for some \(\beta\in(1,2]\) to be chosen later. With this choice of \(\mu\), we upper bound the ratio \(\frac{\Prob{\Sran\sim\mu}{\Sran\spq T}}{\Prob{\Sran\sim\mu}{\Sran\spq R}}\) in terms of \(\beta\). Later in Section \ref{sec:reg-combine-ratio}, after getting an explicit approximation ratio in terms of \(\beta\), we will optimize for \(\beta\)  to get the desired approximation result
to Algorithm \ref{alg:generalSampleNearInd}.
 \cut{The A-optimal paper scales \(x_i\) down to \(\xi_i:=x_i/\beta\) where \(\beta = 1+\epsilon/4\). Then we sample with \(\mu(S) \propto z^S\) where
\begin{equation}
z_i := \frac{\xi_i}{1-\xi_i}
\end{equation}
In this section, we prove the bound with \(\beta\) undefined, and will pick an appropriate \(\beta\) later. }
\begin{lemma} \label{lem:with-reg-mu-approx-ratio}
Let \(x\in [0,1]^n\) such that \(\sum_{i=1}^n x_i = k\). Let \(\mu\) be a distribution on \(\U_{\leq k}\) that is hard-core with parameter \(z\in\R^n\) defined by \(z_i := \frac{x_i}{\beta-x_i}\)
(coordinate-wise) for some \(\beta\in(1,2]\). Then, for all \(T,R\sbq[n]\) of size \(\h,\l\) between 0 and \(d\), we have
\begin{equation}
\frac{\Prob{\Sran\sim\mu}{\Sran\spq T}}{\Prob{\Sran\sim\mu}{\Sran\spq R}} \leq \frac{\beta^{\l-h}}{1-\exp\pr{-\frac{(\beta -1)k-\beta d)^2}{3\beta k}}} \cdot \frac{x^T}{x^R}
.\end{equation}
That is, \(\mu\) is \(\pr{\frac{1}{1-\exp\pr{-\frac{(\beta -1)k-\beta d)^2}{3\beta k}}},\beta}\)-near-pairwise independent.
\end{lemma}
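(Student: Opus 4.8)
The plan is to bound the ratio $\frac{\Prob{\Sran\sim\mu}{\Sran\spq T}}{\Prob{\Sran\sim\mu}{\Sran\spq R}}$ by expanding it in terms of the hard-core parameter $z$ and then controlling the resulting quantities via elementary symmetric polynomials and a concentration inequality. Since $\mu$ is hard-core with parameter $z$ over $\U_{\leq k}$, we have $\Prob{\Sran\sim\mu}{\Sran\spq T} = \frac{\sum_{S\in\U_{\leq k}, S\supseteq T} z^S}{\sum_{S\in\U_{\leq k}} z^S} = z^T \cdot \frac{\sum_{S'\in\U_{\leq k-|T|}(\,[n]\setminus T\,)} z^{S'}}{\sum_{S\in\U_{\leq k}} z^S}$, and similarly for $R$. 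Taking the ratio, the normalizing denominators cancel, giving
\begin{equation}
\frac{\Prob{\Sran\sim\mu}{\Sran\spq T}}{\Prob{\Sran\sim\mu}{\Sran\spq R}} = \frac{z^T}{z^R}\cdot\frac{\sum_{j=0}^{k-\h} e_j(z_{[n]\setminus T})}{\sum_{j=0}^{k-\l} e_j(z_{[n]\setminus R})}.
\end{equation}
The first step is to replace $\frac{z^T}{z^R}$ by $\frac{x^T}{x^R}$ up to a factor $\beta^{\l-\h}$: since $z_i = \frac{x_i}{\beta - x_i}$ and $x_i\le 1 < \beta$, we have $\frac{x_i}{\beta}\le z_i \le \frac{x_i}{\beta-1}$, but more usefully $z_i \le \frac{x_i}{\beta-1}$ is too lossy; instead I would use $z_i/x_i \in [1/\beta,\; 1/(\beta-1)]$ and note that what we actually need is $\frac{z^T}{z^R} = \frac{x^T}{x^R}\prod_{i\in T}\frac{1}{\beta-x_i}\big/\prod_{i\in R}\frac{1}{\beta - x_i}$; combining this with the sums will produce the clean $\beta^{\l-\h}$ factor once the tail-sum ratio is handled.

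The heart of the argument is showing $\frac{\sum_{j=0}^{k-\h} e_j(z_{[n]\setminus T})}{\sum_{j=0}^{k-\l} e_j(z_{[n]\setminus R})} \le \beta^{\l-\h}(\text{something close to }1)$. I would first pass from $z_{[n]\setminus T}$ and $z_{[n]\setminus R}$ to the full vector $z$ (or to $\xi := x/\beta$) by monotonicity of elementary symmetric polynomials in nonnegative arguments — deleting the coordinates in $T$ only decreases each $e_j$, so the numerator is at most $\sum_{j=0}^{k} e_j(z)$; lower-bounding the denominator takes a little more care but can be done by restricting the full sum. The key probabilistic input is that if we think of $e_j(z)/\sum_i e_i(z)$ as (proportional to) the probability a certain independent-ish sampling picks a set of size $j$, then the mass is concentrated near $j = \sum_i \frac{z_i}{1+z_i}$; with $z_i = \frac{x_i}{\beta - x_i}$ one computes $\frac{z_i}{1+z_i} = \frac{x_i}{\beta}$, so the expected size is $k/\beta < k$. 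A Chernoff/Bernstein-type bound on a sum of independent Bernoulli$(x_i/\beta)$ variables then shows that the truncation at $k$ (resp.\ at $k-\h$, $k-\l$) loses at most a $1 - \exp\!\big(-\tfrac{((\beta-1)k - \beta d)^2}{3\beta k}\big)$ factor — the exponent is exactly the deviation of $k/\beta + d$ (a crude upper bound for $k - \h$ when $\h\le d$, after rescaling) from $k$. This mirrors the corresponding step in \cite{nikolov2019proportional}; I expect this truncation/concentration estimate to be the main obstacle, since one must track the error for the numerator and denominator truncated at slightly different levels and show the net effect is a single clean denominator factor.

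Finally I would assemble the pieces: numerator $\le \beta^{-\h}x^T\sum_{j} e_j(\xi)$-type bound times the full normalization, denominator $\ge \beta^{-\l}x^R\sum_j e_j(\xi)$ times (full normalization) times $\big(1 - \exp(-\cdots)\big)$, so that the normalizations and the $\sum_j e_j(\xi)$ factors cancel, leaving exactly $\frac{\beta^{\l-\h}}{1-\exp\big(-\frac{((\beta-1)k-\beta d)^2}{3\beta k}\big)}\cdot\frac{x^T}{x^R}$. The statement that $\mu$ is $\big(\tfrac{1}{1-\exp(-\cdots)},\beta\big)$-near-pairwise independent is then immediate from the definition, taking $c = \tfrac{1}{1-\exp(-((\beta-1)k-\beta d)^2/(3\beta k))}$ and $\alpha = \beta$. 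A minor point to double-check is that the concentration bound is only meaningful (and the claim only nontrivial) when $(\beta-1)k > \beta d$, i.e.\ $k$ is large enough relative to $d$ and $\beta - 1$; this is where the hypothesis $k = \Omega(d/\epsilon + \ldots)$ from the main theorem will eventually enter, but for the lemma itself the inequality holds vacuously or trivially otherwise.
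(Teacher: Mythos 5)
Your proposal is correct and takes essentially the same route as the paper: your generating-function computation, with the observation that \(z_i/(1+z_i)=x_i/\beta\), is precisely the paper's identification of \(\mu\) as the independent Bernoulli\((x_i/\beta)\) random set conditioned on having size at most \(k\), after which the paper likewise drops the size constraint in the numerator, factors the denominator as \(\beta^{-\l}x^R\cdot\Prob{}{\sum_{i\notin R}Y_i\leq k-\l}\), and applies the Chernoff bound with mean at most \(k/\beta\). The only caution is your intermediate suggestion to upper-bound the numerator by \(\sum_{j\leq k}e_j(z)\) over the \emph{full} vector \(z\): taken literally this leaves a stray \(\prod_{i\in T}(1+z_i)\) and loses the clean \(\beta^{-\h}\) factor, so you should instead keep the sum over \([n]\setminus T\) and simply drop the truncation, as your final assembly paragraph in fact does.
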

\begin{proof}
Fix \(T,R\) of size \(0\leq \h,\l \leq d\). Define \(\BB\sbq[n]\) to be the random set that includes each \(i\in[n]\) independently with probability \(x_i/\beta\). Let  \(Y_i = \oneIn\br{i\in\BB} \) and  \(Y=\sum_{i\notin R} Y_i\).  Then, noting that \(z_i = \frac{x_i/\beta}{1-x_i/\beta}\), we have
\begin{align*}
\frac{\Prob{\Sran\sim\mu}{\Sran\spq T}}{\Prob{\Sran\sim\mu}{\Sran\spq R}} &=\frac{\Prob{}{\BB\spq T, |\BB| \leq k}}{\Prob{}{\BB\spq R, |\BB| \leq k}}\leq\frac{\Prob{}{\BB\spq T}}{\Prob{}{\BB\spq R, |\BB| \leq k}} \\
&= \beta^{\l-\h}\frac{x^T}{x^R}\frac{1}{\Prob{}{\sum_{i \notin R} Y_i \leq k-\l}}.
\end{align*}
Let \(x(R)=\sum_{i\in R}x_i\). Then by Chernoff bound,
\begin{equation}
\Prob{}{Y> k-\l}\leq \exp\pr{-\frac{\pr{(\beta-1)k+x(R)-\beta \l}^2}{3 \beta (k-x(R))}} \leq \exp\pr{-\frac{\pr{(\beta-1)k-\beta d}^2}{3 \beta k}} 
\end{equation}
which finishes the proof.
\end{proof}

\section{The Proof of the Main Result} \label{sec:reg-combine-ratio}
The main aim of this section is prove the approximation guarantee of the \(\lambda\)-regularized proportional volume sampling algorithm (Algorithm \ref{alg:generalSampleNearInd}) for \(\lambda\)-regularized \(A\)-optimal design. The main result is stated formally in Theorem \ref{thm:reg-ratio-with-lambda}. 

\begin{theorem} \label{thm:reg-ratio-with-lambda}
Let \(V=[v_1,\ldots,v_n]\in\R^{d\times n},\epsilon\in (0,1),\lambda\geq0\), and \(x\in[0,1]^n\), and suppose 
\begin{equation}
k \geq \frac{10 d}{\epsilon} + \frac{60}{\epsilon^2}\log(4/\epsilon). \label{eq:-without-replacement-k-condition}
\end{equation} 
Denote \(\lambda' = \frac{\lambda}{\norm{V(x)V(x)^\top}{2}}\). Then the \(\lambda\)-proportional volume sampling \(\mu'\) with hard-core measure \(\mu\) with parameter \(z_i:=\frac{x_i}{\beta-x_i}\) (coordinate-wise) with \(\beta=1+\frac{\epsilon}{4} \sqrt{1+\lambda'}\) satisfies
\begin{equation}
\Ex{\Sran\sim\mu'}{\frac{E_{d-1}\pr{Z_\Sran(\lambda)}}{E_{d}\pr{Z_\Sran(\lambda)}}} \leq \pr{1+\frac{\epsilon }{\sqrt{1+\lambda'}}} \frac{E_{d-1}\pr{V(x) V(x)^\top + \lambda I}}{E_{d}\pr{V(x) V(x)^\top +  \lambda I}}. \label{eq:bound-frac}
\end{equation}  
Therefore, Algorithm \ref{alg:generalSampleNearInd} gives \((1+\frac{\epsilon }{\sqrt{1+\lambda'}})\)-approximation ratio to   \(\lambda\)-regularized \textit{A}-optimal design.\end{theorem}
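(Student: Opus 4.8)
The plan is to assemble the claim from the two reductions already in hand --- Lemma~\ref{lem:with-reg-mu-approx-ratio} and Theorem~\ref{thm:total-ind-to-approx-with-reg} --- and then to cash in the slack between the regularized objectives with parameters $\beta\lambda$ and $\lambda$. The measure $\mu$ in the statement is exactly the one used by Algorithm~\ref{alg:generalSampleNearInd}. Applying Lemma~\ref{lem:with-reg-mu-approx-ratio} with $\beta = 1+\frac{\epsilon}{4}\sqrt{1+\lambda'}$ shows $\mu$ is $(c,\beta)$-near-pairwise independent with $c = \bigl(1-\exp(-\tfrac{((\beta-1)k-\beta d)^2}{3\beta k})\bigr)^{-1}$, and then Theorem~\ref{thm:total-ind-to-approx-with-reg} with $\alpha=\beta$ gives, writing $M:=V(x)V(x)^\top$,
\[
\Ex{\Sran\sim\mu'}{\frac{E_{d-1}(Z_\Sran(\lambda))}{E_d(Z_\Sran(\lambda))}}\ \le\ c\beta\cdot\frac{E_{d-1}(M+\beta\lambda I)}{E_d(M+\beta\lambda I)}.
\]
So it remains to prove $c\beta\cdot\frac{E_{d-1}(M+\beta\lambda I)}{E_d(M+\beta\lambda I)}\le\bigl(1+\tfrac{\epsilon}{\sqrt{1+\lambda'}}\bigr)\frac{E_{d-1}(M+\lambda I)}{E_d(M+\lambda I)}$ and then to pass from this bound on the relaxation objective to the algorithmic guarantee.

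For the regularizer slack I would use $\frac{E_{d-1}(N)}{E_d(N)}=\tr(N^{-1})$ (for positive definite $N$): letting $\sigma_1\ge\cdots\ge\sigma_d\ge 0$ be the eigenvalues of $M$, so $\|M\|_2=\sigma_1$ and $\lambda'=\lambda/\sigma_1$, I compare $\tr((M+\beta\lambda I)^{-1})=\sum_i(\sigma_i+\beta\lambda)^{-1}$ with $\tr((M+\lambda I)^{-1})=\sum_i(\sigma_i+\lambda)^{-1}$ coordinatewise; since $\sigma\mapsto\frac{\sigma+\lambda}{\sigma+\beta\lambda}$ is nondecreasing for $\beta\ge 1$ and $\sigma_i\le\sigma_1$, each term ratio is at most $\frac{\sigma_1+\lambda}{\sigma_1+\beta\lambda}=\frac{1+\lambda'}{1+\beta\lambda'}$, so $\frac{E_{d-1}(M+\beta\lambda I)}{E_d(M+\beta\lambda I)}\le\frac{1+\lambda'}{1+\beta\lambda'}\cdot\frac{E_{d-1}(M+\lambda I)}{E_d(M+\lambda I)}$. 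This reduces the theorem to the scalar inequality $c\beta\cdot\frac{1+\lambda'}{1+\beta\lambda'}\le 1+\frac{\epsilon}{\sqrt{1+\lambda'}}$. Writing $t:=\sqrt{1+\lambda'}\ge 1$, so $\beta=1+\frac{\epsilon t}{4}$ and $1+\beta\lambda'=\beta t^2-(\beta-1)$, a short computation gives $\beta\cdot\frac{1+\lambda'}{1+\beta\lambda'}=\bigl(1-\frac{\epsilon}{4\beta t}\bigr)^{-1}$, so the target is $\frac{c}{1-\epsilon/(4\beta t)}\le 1+\frac{\epsilon}{t}$; using $\beta\ge 1$ and $\epsilon/t\le\epsilon<1$, elementary estimates reduce this to $c\le 1+\frac{\epsilon}{2t}$, i.e.\ to the bound $\frac{((\beta-1)k-\beta d)^2}{3\beta k}\ge\log\frac{3t}{\epsilon}$ on the Chernoff exponent.

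This last, constant-sensitive step is the part I expect to be the real obstacle. The exponent is increasing in $k$ over the relevant range, so it suffices to take $k$ equal to the stated bound; the term $\frac{10d}{\epsilon}$ then makes the additive $\beta d$ negligible against $(\beta-1)k=\frac{\epsilon t}{4}k$, and the term $\frac{60}{\epsilon^2}\log\frac{4}{\epsilon}$ makes $(\beta-1)k$ of order $\frac{t}{\epsilon}\log\frac{4}{\epsilon}$, so that after squaring, dividing by $3\beta k$, and using $\beta\le\frac{5t}{4}$ the exponent is of order $t\log\frac{4}{\epsilon}$, which dominates $\log(t/\epsilon)$. Pinning the constants down to exactly those in the statement is where the crude trace comparison of the previous paragraph is too wasteful (it forces $c\le 1+\Theta(\epsilon/t)$); as the introduction indicates, the actual argument instead keeps $E_{d-1}$ and $E_d$ expanded, via Lemma~\ref{lem:expand-det-(d-1)-regularizer} and Cauchy--Binet, as sums over subsets $T,R$ of $\lambda$-powers times $\det(V_T^\top V_T)\Prob{\mu}{\Sran\supseteq T}$ and $\det(V_R^\top V_R)\Prob{\mu}{\Sran\supseteq R}$ (and likewise for the relaxation value), groups the resulting terms by their degree in $\lambda$, and bounds each group separately with sharper elementary-symmetric-polynomial inequalities in the style of \cite{nikolov2019proportional}; the factor $\frac{1+\lambda'}{1+\beta\lambda'}$ then reappears group by group. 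Finally, since Algorithm~\ref{alg:generalSampleNearInd} takes $x$ to minimize the convex relaxation, $\frac{E_{d-1}(M+\lambda I)}{E_d(M+\lambda I)}$ is at most $\min_{|S|=k}\tr(Z_S(\lambda)^{-1})$ (every size-$k$ set is a feasible $0/1$ point), and padding the sampled set up to size $k$ only decreases $\tr(Z_S(\lambda)^{-1})$ because $Z_{S'}(\lambda)\succeq Z_S(\lambda)$ when $S'\supseteq S$; hence the algorithm's output has expected objective within a factor $1+\frac{\epsilon}{\sqrt{1+\lambda'}}$ of the optimum, which gives the stated approximation ratio (deterministically, by repetition and Markov's inequality).
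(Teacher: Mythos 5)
Your skeleton is exactly the paper's: Lemma~\ref{lem:with-reg-mu-approx-ratio} gives $(c,\beta)$-near-pairwise independence, Theorem~\ref{thm:total-ind-to-approx-with-reg} with $\alpha=\beta$ gives the $c\beta$ bound against the $\beta\lambda$-regularized relaxation, and your eigenvalue-by-eigenvalue trace comparison is precisely the paper's Claim~\ref{claim:reg-gap-opt}. The passage from the relaxation bound to the algorithmic guarantee at the end is also the paper's argument. So the only issue is the last, constant-sensitive step, and there you both leave a gap and draw the wrong conclusion from it.

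The gap: you reduce the theorem to showing the Chernoff exponent $\frac{\pr{(\beta-1)k-\beta d}^2}{3\beta k}$ is at least $\log\frac{3t}{\epsilon}$ (so that $c\le 1+\frac{\epsilon}{2t}$), but you only argue this ``to order,'' and then you assert that the trace comparison forces $c\le 1+\Theta(\epsilon/t)$ and is therefore ``too wasteful,'' so that the real proof must re-expand $E_{d-1}$ and $E_d$ via Lemma~\ref{lem:expand-det-(d-1)-regularizer} and Cauchy--Binet and bound groups of terms separately. That last speculation is wrong: the paper's proof is exactly the crude route, and $c\le 1+O(\epsilon/t)$ is attainable with the stated constants. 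The mechanism is that the Chernoff exponent scales \emph{linearly} in $t=\sqrt{1+\lambda'}$, so $\epsilon'=e^{-\text{exponent}}$ can be taken \emph{exponentially} small in $t$, far below the $\Theta(\epsilon/t)$ you need. Concretely, with $\beta_0=1+\frac{\epsilon}{4}$ and $\beta_{\lambda'}=1+\frac{\epsilon t}{4}$, the hypothesis $k\ge \frac{10d}{\epsilon}+\frac{60}{\epsilon^2}\log(4/\epsilon)$ gives $k\ge \frac{2\beta_0 d}{\beta_0-1}+\frac{3\beta_0}{(\beta_0-1)^2}\log(4/\epsilon)$, and the elementary monotonicity facts $\frac{\beta_0}{\beta_0-1}\ge\frac{\beta_{\lambda'}}{\beta_{\lambda'}-1}$ and $\frac{\beta_0}{(\beta_0-1)^2}\ge t\cdot\frac{\beta_{\lambda'}}{(\beta_{\lambda'}-1)^2}$ upgrade this to
\begin{equation*}
k\ \ge\ \frac{2\beta_{\lambda'} d}{\beta_{\lambda'}-1}+\frac{3\beta_{\lambda'}}{(\beta_{\lambda'}-1)^2}\, t\log(4/\epsilon),
\end{equation*}
which is the hypothesis of Claim~\ref{claim:k-condition-general-epsilon} with $\beta=\beta_{\lambda'}$ and $\epsilon'=\pr{\epsilon/4}^{t}$. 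Since $\epsilon/4<1/e$, one has $\pr{\epsilon/4}^{t}\le\frac{\epsilon}{4t}$, hence $c\le\pr{1-\frac{\epsilon}{4t}}^{-1}$, and $\pr{1+\frac{\epsilon}{4t}}\pr{1-\frac{\epsilon}{4t}}^{-1}\le 1+\frac{\epsilon}{t}$ closes the argument. So you should delete the detour about grouping by powers of $\lambda$ (that decomposition is used in Theorem~\ref{thm:total-ind-to-approx-with-reg} and in the sampling implementation, not here) and instead carry out the displayed comparison of $\beta_0$ with $\beta_{\lambda'}$; everything else in your write-up stands.
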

The approximation guarantee of Algorithm \ref{alg:generalSampleNearInd} follows from \eqref{eq:bound-frac} because \(x\)  in Algorithm \ref{alg:generalSampleNearInd} is  a convex solution to  \(\lambda\)-regularized \textit{A}-optimal design, so the objective achieved by \(x\) is at most the optimal value of the original problem.

We briefly outline the proof of  Theorem \ref{thm:reg-ratio-with-lambda} here, which combines results from previous sections. Lemma \ref{lem:with-reg-mu-approx-ratio} shows that our constructed \(\mu\) is \((c,\beta)\)-near-pairwise independent for some \(c\) dependent on \(\beta\). Theorem \ref{thm:total-ind-to-approx-with-reg}
converts \((c,\beta)\)-near-pairwise independence to   the \((c\beta\))-approximation guarantee to \(\beta\lambda\)-regularized \(A\)-optimal design.
However, this may be a gap between the optimums of \(\beta\lambda\)- and \(\lambda\)-regularized \(A\)-optimal design. As \(\beta\) increases, the gap is larger so that the approximation tightens even more (we quantify this gap formally in Claim \ref{claim:reg-gap-opt}). As a result, we want to  pick \(\beta\) small enough to have a small \((c\beta\))-approximation ratio but also big enough to exploit this gap. Choosing  \(\beta\) that gives our desired approximation is done in the proof of Theorem \ref{thm:reg-ratio-with-lambda}.

Before proving the main theorem, Theorem \ref{thm:reg-ratio-with-lambda}, we first  simplify the parameter \(c\) of \((c,\beta)\)-near-pairwise independent \(\mu\) that we constructed.
The claim below shows  that \(k=\Omega\pr{\frac{d}{\epsilon}+\frac{\log(1/\epsilon)}{\epsilon^2}}\) is a right condition to obtain \(c\leq 1+\epsilon\).

  \begin{claim} \label{claim:k-condition-general-epsilon}
Let \(\epsilon'>0,\beta>1\). Suppose
\begin{equation}
k \geq \frac{2\beta d}{\beta-1} + \frac{3\beta}{(\beta-1)^2}\log(1/\epsilon').
\label{eq:con-k-d}
\end{equation}
Then
\begin{equation}
\exp\pr{-\frac{(\beta -1)k-\beta d)^2}{3\beta k}} \leq\epsilon' \label{eq:beta-claim}
.\end{equation}
\end{claim}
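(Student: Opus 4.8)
The plan is to show that the hypothesis \eqref{eq:con-k-d} forces the exponent \(\frac{((\beta-1)k-\beta d)^2}{3\beta k}\) to be at least \(\log(1/\epsilon')\), which immediately gives \eqref{eq:beta-claim}. The natural approach is to split the bound \eqref{eq:con-k-d} into its two additive pieces, \(k_1 := \frac{2\beta d}{\beta-1}\) and \(k_2 := \frac{3\beta}{(\beta-1)^2}\log(1/\epsilon')\), and use each piece to control a different part of the quadratic numerator. Concretely, since \(k \ge k_1 + k_2 \ge \frac{2\beta d}{\beta-1}\), we get \((\beta-1)k \ge 2\beta d\), hence \((\beta-1)k - \beta d \ge \frac{(\beta-1)k}{2}\); this lets me lower-bound the numerator by \(\frac{(\beta-1)^2 k^2}{4}\), so the whole exponent is at least \(\frac{(\beta-1)^2 k}{12\beta}\).

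First I would carry out exactly that reduction: under \eqref{eq:con-k-d},
\[
\frac{((\beta-1)k-\beta d)^2}{3\beta k} \;\ge\; \frac{\big(\tfrac12(\beta-1)k\big)^2}{3\beta k} \;=\; \frac{(\beta-1)^2 k}{12\beta}.
\]
Then I would use the second piece of the hypothesis, \(k \ge \frac{3\beta}{(\beta-1)^2}\log(1/\epsilon')\) — or rather the slightly wasteful bound \(k \ge k_2\) suffices since \(\frac{12\beta}{(\beta-1)^2}\log(1/\epsilon')\) is what we'd want; here I should be a little careful, because \(k \ge k_2 = \frac{3\beta}{(\beta-1)^2}\log(1/\epsilon')\) only gives \(\frac{(\beta-1)^2 k}{12\beta} \ge \frac{1}{4}\log(1/\epsilon')\), which is not quite \(\log(1/\epsilon')\). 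So the clean route is instead to keep the split and bound the numerator below by a \emph{product} of the two linear lower bounds: write \((\beta-1)k - \beta d \ge \tfrac12(\beta-1)k\) as above but also \((\beta-1)k - \beta d \ge (\beta-1)k_2 = \frac{3\beta}{\beta-1}\log(1/\epsilon')\) once we note \((\beta-1)k - \beta d \ge (\beta-1)(k - k_1) \ge (\beta-1)k_2\). Multiplying these two lower bounds for the (nonnegative) quantity \((\beta-1)k-\beta d\),
\[
\big((\beta-1)k-\beta d\big)^2 \;\ge\; \tfrac12(\beta-1)k \cdot \frac{3\beta}{\beta-1}\log(1/\epsilon') \;=\; \tfrac{3}{2}\,\beta k \log(1/\epsilon'),
\]
so dividing by \(3\beta k\) gives exponent \(\ge \tfrac12 \log(1/\epsilon')\) — still a factor \(2\) off. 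The correct bookkeeping is to not halve: from \((\beta-1)k \ge 2\beta d\) we actually have \((\beta-1)k - \beta d \ge \beta d\) as well, but more usefully \((\beta-1)k - \beta d \ge (\beta-1)k - \tfrac{\beta-1}{2}k = \tfrac{\beta-1}{2}k\) used the full \(k_1\); to recover the lost factor I instead observe \((\beta-1)k-\beta d \ge (\beta-1)(k_1+k_2) - \beta d = \beta d + (\beta-1)k_2\ge (\beta-1)k_2\), and separately, using only that \(k\ge k_1+k_2\) and \(\beta d \le \tfrac{\beta-1}{2}(k_1+k_2)\le \tfrac{\beta-1}{2}k\), that \((\beta-1)k-\beta d\ge \tfrac{\beta-1}{2}k\). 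The product of these is \(\ge \tfrac{\beta-1}{2}k\cdot(\beta-1)k_2 = \tfrac{\beta-1}{2}k\cdot\frac{3\beta}{\beta-1}\log(1/\epsilon')=\tfrac{3\beta k}{2}\log(1/\epsilon')\); hmm, this again yields only \(\tfrac12\log(1/\epsilon')\).

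So the honest conclusion is that the constants in \eqref{eq:con-k-d} as literally written give the exponent bounded below by \(\tfrac12\log(1/\epsilon')\), i.e.\ \(\exp(-\text{exponent})\le \sqrt{\epsilon'}\); to get exactly \(\le\epsilon'\) one wants the \(3\) replaced by \(6\) in the second term (or, equivalently, the claim proved with \(\epsilon'\) replaced by \((\epsilon')^{1/2}\) and the outer application in Theorem~\ref{thm:reg-ratio-with-lambda} adjusted). Accordingly, the plan for the writeup is: (i) reduce to showing the exponent is at least \(\log(1/\epsilon')\) via the two observations \((\beta-1)k-\beta d \ge \tfrac{\beta-1}{2}k\) and \((\beta-1)k-\beta d\ge (\beta-1)k_2\), stated cleanly as consequences of \eqref{eq:con-k-d}; (ii) multiply them and divide by \(3\beta k\); (iii) verify the constant works (using \(\frac{3\beta}{\beta-1}\cdot\frac{\beta-1}{2}\cdot\frac{1}{3\beta}=\frac16\) — so in fact one needs the coefficient of \(\log(1/\epsilon')\) in the numerator product to be \(\ge 3\beta k\), which forces the \(k_2\)-coefficient to be \(\frac{6\beta}{(\beta-1)^2}\)). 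The main obstacle, then, is purely the constant-chasing: the structure of the argument (split \(k\), bound the affine numerator below by a product of two affine lower bounds, divide) is routine, but one must fix the numerical constant in the statement — either \(\frac{3\beta}{(\beta-1)^2}\mapsto\frac{6\beta}{(\beta-1)^2}\), or absorb the factor \(\tfrac12\) downstream — and I would flag that the \(10d/\epsilon + (60/\epsilon^2)\log(4/\epsilon)\) bound in \eqref{eq:-without-replacement-k-condition} already has the slack (\(60\) vs.\ the \(48\) that \(\frac{6\beta}{(\beta-1)^2}\) with \(\beta\le 2\), \(\epsilon'=\epsilon/4\) would demand) to absorb it.
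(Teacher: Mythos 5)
There is a genuine gap: your proposal ends by asserting that the claim is false as stated and that the coefficient $\frac{3\beta}{(\beta-1)^2}$ must be raised to $\frac{6\beta}{(\beta-1)^2}$. That conclusion is an artifact of your bounding strategy, not a defect of the claim; the stated constants are correct. The paper's proof notes that \eqref{eq:beta-claim} is equivalent (since \eqref{eq:con-k-d} forces $(\beta-1)k\geq 2\beta d\geq\beta d$) to $(\beta-1)k-\beta d\geq\sqrt{3\beta k\log(1/\epsilon')}$, treats this as a quadratic inequality in $\sqrt{k}$, and observes it holds exactly when
\[
\sqrt{k}\;\geq\;\frac{\sqrt{3\beta\log(1/\epsilon')}+\sqrt{3\beta\log(1/\epsilon')+4(\beta-1)\beta d}}{2(\beta-1)},
\]
which by $\sqrt{a}+\sqrt{b}\leq\sqrt{2(a+b)}$ is implied by $\sqrt{k}\geq\sqrt{\frac{3\beta}{(\beta-1)^2}\log(1/\epsilon')+\frac{2\beta d}{\beta-1}}$, i.e.\ by \eqref{eq:con-k-d}. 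A sanity check that no factor of $2$ is missing: with $d=0$ and $\beta=2$ the hypothesis reads $k\geq 6\log(1/\epsilon')$ and the conclusion needs $k^2/(6k)\geq\log(1/\epsilon')$, which is exactly the same condition, so the constant $3$ is sharp.

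The step where you lose the factor $2$ is replacing $\bigl((\beta-1)k-\beta d\bigr)^2$ by the product of two \emph{separate} affine lower bounds: each of your bounds ($\geq\tfrac{\beta-1}{2}k$ and $\geq(\beta-1)k_2$) discards one of the two additive contributions, so their product misses the cross term. If you prefer to keep the split-and-expand structure rather than solve the quadratic, write $k=k_1+k_2+s$ with $s\geq0$, so that $(\beta-1)k-\beta d=\beta d+(\beta-1)(k_2+s)$, and expand the square into three terms. The identities $2\beta d(\beta-1)k_2=3\beta\log(1/\epsilon')\,k_1$ and $(\beta-1)^2k_2=3\beta\log(1/\epsilon')$ give
\[
\bigl(\beta d+(\beta-1)(k_2+s)\bigr)^2\;\geq\;2\beta d(\beta-1)(k_2+s)+(\beta-1)^2k_2(k_2+s)\;\geq\;3\beta\log(1/\epsilon')\,(k_1+k_2+s)\;=\;3\beta k\log(1/\epsilon'),
\]
with the stated constants. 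Either route closes the gap without modifying the statement or the downstream bound \eqref{eq:-without-replacement-k-condition}.
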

\begin{proof}
\eqref{eq:beta-claim} is equivalent to
\begin{align*}
(\beta-1)k-\beta d \geq \sqrt{3\beta \log(1/\epsilon') k} 
\end{align*}
which, by solving the quadratic equation in \(\sqrt{k}\), is further equivalent to 
\begin{align*}
\sqrt{k} \geq \frac{ \sqrt{3\beta \log(1/\epsilon')}+\sqrt{3\beta \log(1/\epsilon')+4(\beta-1)\beta d}}{2(\beta-1)}.
\end{align*}
Using inequality \(\sqrt{a}+\sqrt{b}\leq \sqrt{2(a+b)}\), we have
\begin{align*}
\frac{ \sqrt{3\beta \log(1/\epsilon')}+\sqrt{3\beta \log(1/\epsilon')+4(\beta-1)\beta d}}{2(\beta-1)} &\leq \frac{ \sqrt{3\beta \log(1/\epsilon')+2(\beta-1)\beta d}}{\beta-1} \\
&= \sqrt{ \frac{3\beta}{(\beta-1)^2}\log(1/\epsilon')+\frac{2\beta d}{\beta-1} }.
\end{align*}
So, the result follows from \eqref{eq:con-k-d}.
\end{proof}

Next, we quantify the gap of the optimum of \(\beta\lambda\)-regularized \(A\)-optimal design and that of \(\lambda\)-regularized \(A\)-optimal design. 
\begin{claim} \label{claim:reg-gap-opt}
Let \(M\in\R^{d \times d}\) be a PSD matrix, and let \(\beta,\lambda\geq 0\). Then,
\begin{align*}
\frac{E_{d-1}\pr{M + \beta\lambda I}}{E_{d}\pr{M + \beta \lambda I}} \leq\frac{1+\frac{\lambda}{\norm{M}{2}}}{1+ \beta \frac{\lambda}{\norm{M}{2}}} \frac{E_{d-1}\pr{M + \lambda I}}{E_{d}\pr{M + \lambda I}}.
\end{align*}
\end{claim}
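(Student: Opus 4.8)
The plan is to diagonalize $M$ and collapse the inequality to a per‑eigenvalue scalar statement. Write $\mu_1\ge\mu_2\ge\cdots\ge\mu_d\ge 0$ for the eigenvalues of $M$, so that $\|M\|_2=\mu_1$, and record the elementary identity
\[
\frac{E_{d-1}(M+aI)}{E_d(M+aI)}=\sum_{j=1}^d\frac{1}{\mu_j+a}
\]
valid for every $a\ge 0$: it follows from $E_d(M+aI)=\prod_j(\mu_j+a)$ and $E_{d-1}(M+aI)=\sum_j\prod_{i\ne j}(\mu_i+a)$, and is exactly the $l=d$, $l'=d-1$ specialization of Lemma~\ref{lem:expand-det-(d-1)-regularizer}. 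I would dispose of the degenerate case $\|M\|_2=0$ at the outset (there $\lambda/\|M\|_2$ is undefined and the statement is vacuous; when $\lambda=0$ the claim is the trivial equality). Rewriting the prefactor as $\frac{1+\lambda/\mu_1}{1+\beta\lambda/\mu_1}=\frac{\mu_1+\lambda}{\mu_1+\beta\lambda}$, the claim is reduced to
\[
\sum_{j=1}^d\frac{1}{\mu_j+\beta\lambda}\ \le\ \frac{\mu_1+\lambda}{\mu_1+\beta\lambda}\sum_{j=1}^d\frac{1}{\mu_j+\lambda}.
\]

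Next I would prove this term by term: it suffices to show, for each $j$,
\[
\frac{1}{\mu_j+\beta\lambda}\ \le\ \frac{\mu_1+\lambda}{\mu_1+\beta\lambda}\cdot\frac{1}{\mu_j+\lambda},
\]
which, after clearing denominators, is equivalent to $(\mu_j+\lambda)(\mu_1+\beta\lambda)\le(\mu_1+\lambda)(\mu_j+\beta\lambda)$. Expanding, the two quadratic terms $\mu_1\mu_j$ and $\beta\lambda^2$ cancel on both sides, and the inequality reduces to $\lambda(\beta-1)(\mu_1-\mu_j)\ge 0$, which holds because $\lambda\ge 0$, $\beta\ge 1$, and $\mu_1=\|M\|_2\ge\mu_j$. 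Summing over $j=1,\dots,d$ yields the displayed bound and hence the claim.

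The one point worth flagging is that the argument — and in fact the statement itself — requires $\beta\ge 1$ rather than merely $\beta\ge 0$: for $\beta<1$ the map $t\mapsto\frac{t+\lambda}{t+\beta\lambda}$ is decreasing, so a small eigenvalue $\mu_j$ makes the left side arbitrarily large relative to the right. This causes no difficulty, since in Algorithm~\ref{alg:generalSampleNearInd} and Theorem~\ref{thm:reg-ratio-with-lambda} we always apply the claim with $\beta=1+\frac{\epsilon}{4}\sqrt{1+\lambda'}>1$. There is essentially no analytic obstacle: once the determinant ratio is rewritten as $\sum_j(\mu_j+a)^{-1}$, everything collapses to the scalar inequality above, and the only genuine choices are to bound termwise and to notice the cancellation of the quadratic terms.
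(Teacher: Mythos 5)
Your proof is correct and is essentially the paper's own argument: both reduce the determinant ratio to $\sum_j (\mu_j+a)^{-1}$ and bound termwise via the monotonicity of $t\mapsto\frac{t+\lambda}{t+\beta\lambda}$, maximized at $t=\|M\|_2$. Your observation that the claim as stated needs $\beta\ge 1$ rather than $\beta\ge 0$ is accurate (the paper's proof implicitly uses it too), and is harmless since the claim is only invoked with $\beta>1$.
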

\begin{proof}
Let \(\gamma\) be eigenvalues of \(M\). Then,
 \(\frac{\gamma_i+\lambda}{\gamma_i+\beta\lambda} \leq\frac{\norm{M}{2}+\lambda}{\norm{M}{2}+\beta\lambda} =\frac{1+\frac{\lambda}{\norm{M}{2}}}{1+ \beta \frac{\lambda}{\norm{M}{2}}}\) for all \(i\in[d]\). Therefore,
\begin{align*}
\frac{E_{d-1}\pr{M + \beta\lambda I}}{E_{d}\pr{M + \beta \lambda I}}&= \sum_{i=1}^d \frac{1}{\gamma_i+\beta \lambda} \\
&\leq \frac{1+\frac{\lambda}{\norm{M}{2}}}{1+ \beta \frac{\lambda}{\norm{M}{2}}} \sum_{i=1}^d  \frac{1}{\gamma_i+\lambda} =\frac{1+\frac{\lambda}{\norm{M}{2}}}{1+ \beta \frac{\lambda}{\norm{M}{2}}} \frac{E_{d-1}\pr{M + \lambda I}}{E_{d}\pr{M + \lambda I}}
\end{align*}
as desired.
\end{proof}

Now we are ready to prove the main result of this work. 
\begin{proof}[Proof of Theorem \ref{thm:reg-ratio-with-lambda}]
Denote \(\beta_{\lambda'} =1+\frac{\epsilon \sqrt{1+\lambda'}}{4} \) and \(\beta_0=1+\frac{\epsilon}{4}\). By inequality \eqref{eq:-without-replacement-k-condition},  
\begin{align}
k \geq \frac{10 d}{\epsilon} + \frac{60}{\epsilon^2}\log(4/\epsilon)&=  \frac{5d}{2(\beta_0-1)} + \frac{15}{4(\beta_0-1)^2}\log(4/\epsilon)\\ &\geq  \frac{2\beta_0 d}{\beta_0-1} + \frac{3\beta_0}{(\beta_0-1)^2}\log(4/\epsilon). \label{eq:without-replacement-k-con-no-reg}
\end{align}
The last inequality is by \(\beta_0 =1+\frac{\epsilon}{4}\leq \frac{5}{4}\). We have \(\frac{\beta_{0}}{\beta_{0}-1}\geq \frac{\beta_{\lambda'}}{\beta_{\lambda'}-1} \) and \[\frac{\beta_0}{(\beta_0-1)^2} = \frac{1}{\beta_{0}-1}+\frac{1}{(\beta_{0}-1)^2} = \frac{ \sqrt{1+\lambda'}}{\beta_{\lambda'}-1}+\frac{ (\sqrt{1+\lambda'})^2}{(\beta_{\lambda'}-1)^2} \geq \frac{ \sqrt{1+\lambda'}}{\beta_{\lambda'}-1}+\frac{ \sqrt{1+\lambda'}}{(\beta_{\lambda'}-1)^2}\]
\[ =\sqrt{1+\lambda'}\frac{\beta_{\lambda'}}{(\beta_{\lambda'}-1)^2}. \]
Therefore, \eqref{eq:without-replacement-k-con-no-reg} implies
\begin{equation}
k \geq \frac{2\beta_{\lambda'} d}{\beta_{\lambda'}-1} + \frac{3\beta_{\lambda'}}{(\beta_{\lambda'}-1)^2}\sqrt{1+\lambda'}\log(4/\epsilon)
.\end{equation}
By Lemma \ref{lem:with-reg-mu-approx-ratio}, \(\mu\) is \(\pr{c,\beta}\)-near-pairwise independent for \(c=\frac{1}{1-\exp\pr{-\frac{(\beta -1)k-\beta d)^2}{3\beta k}}}\).
We now use Claim \ref{claim:k-condition-general-epsilon} to bound \(c\): with the choice of \(\beta=\beta_{\lambda'} \) and \(\epsilon'=\pr{\frac{\epsilon}{4}}^{\sqrt{1+\lambda'}}\) in Claim \ref{claim:k-condition-general-epsilon}, we have \(c \leq \frac{1}{1-\epsilon'}\).
Therefore, by Theorem \ref{thm:total-ind-to-approx-with-reg}, the objective of Algorithm \ref{alg:generalSampleNearInd}'s output in expectation  is within multiplicative factor \(c\beta=\frac{\beta^{}}{1-\epsilon'}\) from the optimum of \(\beta\lambda\)-regularized \textit{A}-optimal design, i.e.,\begin{equation}
\Ex{\Sran\sim\mu'}{\frac{E_{d-1}\pr{Z_\Sran(\lambda)}}{E_{d}\pr{Z_\Sran(\lambda)}}} \leq \frac{\beta^{}}{1-\epsilon'} \frac{E_{d-1}\pr{V(x) V(x)^\top + \beta\lambda I}}{E_{d}\pr{V(x) V(x)^\top + \beta \lambda I}}. \label{eq:bound-to-beta-lam}
\end{equation}
Now we apply Claim \ref{claim:reg-gap-opt} to exploit the gap between \(\lambda\)- and \(\beta\lambda\)-regularized \textit{A}-optimal designs to get 
\begin{align}
\frac{E_{d-1}\pr{V(x) V(x)^\top + \beta\lambda I}}{E_{d}\pr{V(x) V(x)^\top + \beta \lambda I}} \leq\frac{1+\lambda'}{1+ \beta \lambda'}\cdot \frac{E_{d-1}\pr{V(x) V(x)^\top + \lambda I}}{E_{d}\pr{V(x) V(x)^\top + \lambda I}}. \label{eq:bound-beta-lambda}
\end{align}
Therefore, combining \eqref{eq:bound-to-beta-lam} and \eqref{eq:bound-beta-lambda}, we have that Algorithm \ref{alg:generalSampleNearInd} gives approximation ratio of\begin{align*}
\frac{\beta^{}}{1-\epsilon'} \cdot \frac{1+\lambda'}{1+ \beta \lambda'} &= \pr{1+\frac{\beta-1}{1+\beta \lambda'}} \pr{1-\epsilon'}^{-1} \leq \pr{1+\frac{\beta-1}{1+ \lambda'}} \pr{1-\epsilon'}^{-1} \\
&=\pr{1+\frac{\epsilon }{4\sqrt{1+\lambda'}}}\pr{1-\epsilon'}^{-1}.
\end{align*}
As \(\epsilon/4<1/e\), we have \(\epsilon'=\pr{\frac{\epsilon}{4}}^{\sqrt{1+\lambda'}}\leq \frac{\epsilon}{4\sqrt{1+\lambda'}}\), which gives \( \pr{1-\epsilon'}^{-1} \leq \pr{1-\frac{\epsilon}{4\sqrt{1+\lambda'}}}^{-1} \). Thus, the approximation factor is  bounded by

\begin{equation}
\pr{1+\frac{\epsilon }{4\sqrt{1+\lambda'}}} \pr{1-\frac{\epsilon}{4\sqrt{1+\lambda'}}}^{-1}  \leq 1+\frac{\epsilon }{\sqrt{1+\lambda'}}
\end{equation}
where the inequality is by \(\epsilon \leq 1\).
\end{proof}

\cut{ 
Note that  we could have used fractional solution \(x\) from solving convex relaxation with regularizer \(\beta \lambda\) instead of \(\lambda\) in Algorithm \ref{alg:generalSampleNearInd}. The proof will diverge at \eqref{eq:bound-to-beta-lam}
by applying 
\begin{equation}
\frac{E_{d-1}\pr{V(x) V(x)^\top + \beta\lambda I}}{E_{d}\pr{V(x) V(x)^\top + \beta \lambda I}} \leq\frac{E_{d-1}\pr{V_{S^*}V_{S^*}^\top + \beta\lambda I}}{E_{d}\pr{ V_{S^*}V_{S^*}^\top + \beta \lambda I}}  
\end{equation}
where \(S^*\) is an optimal solution to the original \(\lambda\)-regularized  problem.
The rest of the proof follows similarly, except that we apply Claim  \ref{claim:reg-gap-opt} with \(M=V_{S^*}V_{S^*}^\top\) instead of \(V(x)V(x)^\top\). This gives \((1+\frac{\epsilon }{\sqrt{1+\lambda'}})\)-approximation with \(\lambda'=\frac{\lambda}{\norm{V_{S^*}V_{S^*}^\top}{2}}\) instead of \(\lambda'=\frac{\lambda}{\norm{V(x)V(x)^\top}{2}}\).
\begin{remark}
Let \(V,\epsilon,\lambda,x\) be as in Theorem \ref{thm:reg-ratio-with-lambda}
\end{remark}
}


\section{Efficient Implementation of \(\lambda\)-Regularized Proportional Volume Sampling} \label{sec:efficient-regularized}

In this section, we show that \(\lambda\)-regularized proportional volume sampling can be implemented in polynomial time. In fact, we will show that the same is true for its generalization, \(\lambda\)-regularized proportional \textit{\(l\)-volume }sampling, which is motivated from the generalized ratio objective \eqref{eq:objective-gen}. We first describe proportional \(l\)-volume sampling and its the efficient implementation results. Then, we generalize the results to the \(\lambda\)-regularized counterpart.

An algorithm to solve the generalized ratio objective is proportional {\(l\)-volume }sampling \cite{nikolov2019proportional}, which is to sample \(S\) with probability proportional to \(z^S E_\ell(V_SV_S^\top)\) (instead of \(z^S \det(V_SV_S^\top)\)) for some \(z\in\R^n\) dependent on a fractional solution \(x\in\R^n\) of the convex relaxation of \eqref{eq:objective-gen}. \citet{nikolov2019proportional} show that this algorithm achieves (\(1+\epsilon\))-approximation for \(k\geq\Omega\left(\frac{l}{\epsilon}+\frac{\log 1/\epsilon}{\epsilon^2}\right)\). The efficient implementation of proportional {\(l\)-volume }sampling is stated as follows.
We denote \(O(n^\omega)\) the runtime complexity of matrix multiplication (the best known is \(\omega\approx2.373\) \cite{le2014powers}).
\begin{theorem}[follows from \cite{nikolov2019proportional}] \label{thm:runtime-ori}
Let \(n,d,k\) be positive integers, $z \in\R_{+}^n$, \(\U\in\{ \U_k,\U_{\leq k} \} \), $V=[v_1,\ldots,v_n]\in\R^{d \times n}$, and \(0\leq l' < l \leq d\) be a pair of integers. Let \(\mu'\) be the proportional \(l\)-volume sampling distribution over \(\U\): \(\mu'(S)\propto z^SE_l\pr{V_SV_S^\top}\) for all \(S\in \U\). There are

\begin{itemize}
\item {}an implementation to sample from \(\mu'\)  and
\item a deterministic algorithm that outputs a set $S^*\in \U$  such that
\begin{equation}
\left(
\frac{E_{l'}(V_{S^*} V_{S^*}^\top)}{E_l(V_{S^*} V_{S^*}^\top)} \right)^{\frac{1}{l-l'}}\geq \Expectation{\Sran\sim\mu'}{\left(\frac{E_{l'}(V_\Sran V_\Sran^\top)}{E_l(V_\Sran V_\Sran^\top)}\right)^{\frac{1}{l-l'}}}. \label{eq:deterministic-gen-ineq}
\end{equation}
\end{itemize}
Both algorithms  run in \(O\pr{n^{1+\omega}lk^2\log(lk)}\) number of arithmetic operations.
\end{theorem}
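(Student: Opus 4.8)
The plan is to reduce the sampling and the deterministic selection to computing, for each fixed $t \in \{0,1,\dots,k\}$, the quantities $\sum_{S \in \U_t} z^S E_l(V_S V_S^\top)$ and related ``one coordinate marginalized'' sums, all of which are polynomial evaluations that can be handled via characteristic polynomials. First I would recall the key identity that makes everything computable: for a positive semidefinite matrix $\sum_{i \in S} v_i v_i^\top$, the value $E_l(V_S V_S^\top)$ is the coefficient of $x^{d-l}$ in $\det(xI + V_S V_S^\top)$, and by Cauchy--Binet $E_l(V_S V_S^\top) = \sum_{R \subseteq S, |R| = l} \det(V_R^\top V_R)$. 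The crucial trick from \cite{nikolov2019proportional} is to introduce an auxiliary scalar variable and write $z^S E_l(V_S V_S^\top)$ as the coefficient extraction from a product over $i \in S$ of small local factors (each depending on $v_i$, $z_i$, and the formal variables), so that summing over all $S$ of a fixed size becomes summing a symmetric function, which is an elementary-symmetric-polynomial evaluation in $n$ ``matrix-weighted'' terms.

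The main steps, in order, are: (1) Express $\sum_{S \in \U, |S| = t} z^S E_l(V_S V_S^\top)$ as a coefficient in a bivariate polynomial obtained by expanding $\prod_{i=1}^n (1 + y\, z_i\, (1 + x\, v_i v_i^\top\text{-contribution}))$ --- more precisely, track the degree in an auxiliary variable marking $|S| = t$ and the degree marking which size-$l$ submatrix contributes; this is a polynomial of degree $O(k)$ in the size-marker and $O(l)$ in the rank-marker, with matrix-valued intermediate computations of size $d \times d$, costing $O(n \cdot n^\omega \cdot l k \cdot \dots)$ to build up iteratively over the $n$ vectors. (2) For $\U_{\le k}$ versus $\U_k$, just read off the appropriate coefficient(s). (3) To sample: condition coordinate by coordinate --- for $i = 1, \dots, n$ decide whether $i \in S$ by computing the ratio of the partition function with $i$ forced in to the total, each such conditional partition function being the same type of polynomial-coefficient computation on the remaining vectors with updated size budget; this gives $n$ rounds, each costing one evaluation. (4) For the deterministic guarantee \eqref{eq:deterministic-gen-ineq}: use the standard derandomization via conditional expectations --- at each coordinate, pick the branch (in or out of $S$) that does not increase the conditional expectation of $\big(E_{l'}/E_l\big)^{1/(l-l')}$; since the function inside the expectation is a ratio, one must be slightly careful and instead track $\E[E_{l'}(V_S V_S^\top) \mid \text{prefix}]$ and $\E[E_l(V_S V_S^\top)\mid\text{prefix}]$ separately and use convexity/Jensen to argue the final single set $S^*$ satisfies the claimed inequality. (5) Bound the arithmetic: each partition-function evaluation involves manipulating degree-$O(lk)$ polynomials with $d\times d$ matrix coefficients via $n$ incremental updates, and there are $O(n)$ such evaluations for sampling and $O(n)$ for derandomization, yielding the stated $O(n^{1+\omega} l k^2 \log(lk))$ (the $\log(lk)$ from FFT-based polynomial multiplication in the size-marker variable).

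The hard part will be step (1) together with the bookkeeping in step (5): getting the generating-function identity exactly right so that a single symmetric-function expansion simultaneously encodes both the hard-core weight $z^S$ and the $E_l$ factor, while keeping the matrix dimension at $d$ and the polynomial degrees at $O(k)$ and $O(l)$, and then verifying that the incremental construction over the $n$ input vectors (each step a constant number of $d\times d$ matrix multiplications times a polynomial multiplication of degree $O(lk)$) multiplies out to the claimed bound rather than something larger. A secondary subtlety is step (4): because the objective is a ratio raised to a fractional power rather than a linear functional, the naive method of conditional expectations does not directly apply to the objective itself; I expect to route around this exactly as \cite{nikolov2019proportional} does, by derandomizing on the pair $(E_{l'}, E_l)$ and invoking concavity of $(\cdot)^{1/(l-l')}$ in the appropriate variable to transfer the expectation bound to the deterministic output. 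Since all of this is asserted to ``follow from \cite{nikolov2019proportional},'' I would present it as: restate their algorithm, note that neither the generating function nor the derandomization uses anything about $\lambda$, and cite their runtime analysis verbatim.
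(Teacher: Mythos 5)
Your proposal follows essentially the same route as the paper: both reduce the sampling and the derandomization to computing sums of the form $\sum_{|S|=k_0,\,I\subseteq S,\,J\cap S=\emptyset} z^S E_{d_0}(V_S^\top V_S)$ (Lemma~\ref{sumofProductDet}, via Cauchy--Binet and a generating-function/characteristic-polynomial expansion), then sample coordinate-by-coordinate using the marginals $P(I,J)$ and derandomize via the conditional quantities $X(I,J)$, citing \cite{nikolov2019proportional} for the details and the runtime. Your flagged subtlety in step (4) is real and is handled in the cited work exactly as you anticipate, by tracking the ratio of conditional expectations of $E_{l'}$ and $E_l$ rather than the conditional expectation of the (fractional power of the) ratio itself.
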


The main ingredient in the algorithms and analysis is to efficiently compute a sum of a particular form efficiently as follows.  
\begin{lemma} 
[follows from \cite{nikolov2019proportional}] \label{sumofProductDet}
Let $z\in\R_+^n,v_1,\ldots,v_n\in \R^d$, and $V=[v_1,\ldots,v_n]$. Let $I,J\subseteq [n]$ be disjoint. Let $1\leq k\leq n$ and \(1\leq l\leq d\). Then the quantity
\begin{equation}\label{exp:hardcoreSum}
 \sum_{|S|=k_0,I\subseteq S,J\cap S=\emptyset}z^S
 E_{d_0}(V_S^\top V_S)
\end{equation}
for all  \(k_0=0,1,\ldots,k\) and \(d_0=0,\ldots,l\) can be simultaneously  computed in \(O\pr{n^\omega lk |I|\cdot\log(lk|I|)}\) number of arithmetic operations. 
\end{lemma}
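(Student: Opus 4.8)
Since \eqref{exp:hardcoreSum} involves no regularizer, the plan is to recover the argument of \cite{nikolov2019proportional}. The first step is to pass from the $|S|\times|S|$ Gram matrix $V_S^\top V_S$ to the $d\times d$ matrix $V_SV_S^\top=\sum_{i\in S}v_iv_i^\top$: these share the same nonzero eigenvalues, so $E_{d_0}(V_S^\top V_S)=E_{d_0}(V_SV_S^\top)$, and the latter is the coefficient of $t^{d_0}$ in $\det\pr{I_d+t\,V_SV_S^\top}$. Introducing a second formal variable $y$ to record $|S|$, all the quantities \eqref{exp:hardcoreSum} for $k_0=0,\dots,k$ and $d_0=0,\dots,l$ become the coefficients of $y^{k_0}t^{d_0}$ in
\[
G(y,t)\;=\;\sum_{S\,:\,I\sbq S\sbq[n]\setminus J}y^{|S|}z^S\det\pr{I_d+t\sum_{i\in S}v_iv_i^\top},
\]
and it suffices to recover those with $k_0\le k$, $d_0\le l$.

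The crux is a closed form for $G$. Write $A=[n]\setminus(I\cup J)$ and $S=I\cup S'$ with $S'\sbq A$, and abbreviate $\tilde z_i=yz_i$. The key identity, which I would prove by induction on $|A|$, is
\[
\sum_{S'\sbq A}\tilde z^{S'}\det\pr{B+t\sum_{i\in S'}v_iv_i^\top}\;=\;\prod_{i\in A}(1+\tilde z_i)\cdot\det\pr{B+t\sum_{i\in A}\frac{\tilde z_i}{1+\tilde z_i}v_iv_i^\top},
\]
valid for any invertible $d\times d$ matrix $B$; taking $B=I_d+t\sum_{i\in I}v_iv_i^\top$ yields
\[
G(y,t)\;=\;y^{|I|}z^I\prod_{i\in A}(1+\tilde z_i)\cdot\det\pr{I_d+t\sum_{i\in I}v_iv_i^\top+t\sum_{i\in A}\frac{\tilde z_i}{1+\tilde z_i}v_iv_i^\top}.
\]
In the inductive step one peels off an index $i_0$, with $A'=A\setminus\{i_0\}$: the sum over $S'\sbq A$ splits into the part with $i_0\notin S'$ and the part with $i_0\in S'$, which by the inductive hypothesis applied to $B$ and to $B+tv_{i_0}v_{i_0}^\top$ combine into $\prod_{i\in A'}(1+\tilde z_i)\br{\det C+\tilde z_{i_0}\det\pr{C+tv_{i_0}v_{i_0}^\top}}$ with $C=B+t\sum_{i\in A'}\frac{\tilde z_i}{1+\tilde z_i}v_iv_i^\top$; the matrix-determinant lemma $\det\pr{C+vv^\top}=\det(C)\pr{1+v^\top C^{-1}v}$ then collapses the bracket to $(1+\tilde z_{i_0})\det\pr{C+t\frac{\tilde z_{i_0}}{1+\tilde z_{i_0}}v_{i_0}v_{i_0}^\top}$, which is exactly the claimed right-hand side for $A$. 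Since $z\in\R_+^n$, one can run this induction for $t\ge 0$, where $C\succeq I_d$ is invertible, and extend to all $t$ because both sides are polynomials in $t$ (equivalently, one works over $\R(t)$).

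It remains to extract the coefficients of $y^{k_0}t^{d_0}$ from this closed form by polynomial interpolation. Each evaluation of $G(y,t)$ at a numeric pair $(y,t)$ with $y>0$ (so that all $1+yz_i>0$) costs the assembly of the $d\times d$ matrix $\sum_i(\cdot)\,v_iv_i^\top$ followed by one determinant, i.e.\ $O(n^\omega)$ arithmetic operations; interpolating in the $t$-direction up to degree $l$ and in the $y$-direction up to the needed degrees $k_0\le k$, using FFT-based multipoint evaluation and interpolation for the polynomial arithmetic, gives the $\log(lk|I|)$ factor, and the overall count of $O\pr{n^\omega lk|I|\log(lk|I|)}$ matches the bookkeeping in \cite{nikolov2019proportional}. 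The main obstacle is the closed-form identity for $G$ together with the invertibility/formal-variable care in its proof; once that is in hand, the coefficient extraction and the running-time accounting are routine adaptations of the implementation in \cite{nikolov2019proportional}.
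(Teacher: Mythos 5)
Your reduction to the generating function and the closed-form product identity are sound: the induction via the matrix-determinant lemma checks out (both the base case and the collapsing of $\det C+\tilde z_{i_0}\det(C+tv_{i_0}v_{i_0}^\top)$ into $(1+\tilde z_{i_0})\det(C+t\tfrac{\tilde z_{i_0}}{1+\tilde z_{i_0}}v_{i_0}v_{i_0}^\top)$), and absorbing $t\sum_{i\in I}v_iv_i^\top$ into $B$ is a clean way to enforce $I\sbq S$. This is essentially the same engine as the cited implementation. The genuine gap is in the coefficient extraction. As a polynomial, $G(y,t)$ has degree up to $n-|J|$ in $y$ (coming from $\prod_{i\in A}(1+yz_i)$) and up to $d$ in $t$ (since $\det(I_d+tM)$ has $t$-degree equal to $\rank(M)$), and its coefficients of $y^{k_0}$ with $k_0>k$ and of $t^{d_0}$ with $d_0>l$ are nonzero in general. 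The low-order coefficients of a black-box polynomial are not determined by fewer evaluations than its degree plus one (evaluations at $k+1$ points only pin down aliased sums of coefficients), so evaluating at roughly $(k+1)(l+1)$ numeric points and interpolating, as you propose, cannot recover the coefficients of $y^{k_0}t^{d_0}$; you would need on the order of $(n+1)(d+1)$ evaluations, costing $O(n^{1+\omega}d)$, which does not match the claimed $O(n^{\omega}lk|I|\log(lk|I|))$ when $lk|I|\ll nd$.

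The fix is to drop numeric evaluation and carry out your closed form symbolically over the truncated ring $\R[y,t]/(y^{k+1},t^{l+1})$: compute $\prod_{i\in A}(1+yz_i)$ and each $yz_i/(1+yz_i)$ as power series modulo $y^{k+1}$, assemble the $d\times d$ matrix with truncated-polynomial entries, and take its determinant using $O(n^{\omega})$ ring operations (the pivots are units because their constant terms equal $1$, or one can use a division-free determinant algorithm); each ring operation costs $O(kl\log(kl))$ by FFT, which recovers the stated bound --- in fact without the $|I|$ factor, since your conditioning on $I$ needs no extra formal variable. With that substitution your argument is complete and matches the cited construction.
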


We outline the proof of Theorem \ref{thm:runtime-ori} briefly here in order to state and prove our result (the full proof of Theorem \ref{thm:runtime-ori} can be found in \cite{nikolov2019proportional}). The proof first shows that, for any given disjoint \(I,J\subseteq[n]\), the marginal probability 
\begin{equation}
P(I,J):=\Prob{\Sran\sim \mu'}{i\in\altmathcal{S}|I\subseteq \Sran, J\cap \Sran=\emptyset} \label{eq:mar-prob}
\end{equation}
and the conditional expectation 
\begin{equation}
X(I,J):=\Expectation{\Sran\sim\mu'}{\left(
\frac{E_{l'}(V_{S^*} V_{S^*}^\top)}{E_l(V_{S^*} V_{S^*}^\top)} \right)^{\frac{1}{l-l'}} | I\subset \altmathcal{S},J\cap \altmathcal{S}=\emptyset} \label{eq:con-ex}
\end{equation}
are in the form of Lemma \ref{sumofProductDet}.   Theorem \ref{thm:runtime-ori} then follows by iteratively sampling an element \(i\in[n]\) one by one with probability \(P(I,J)\) and updating \(I,J\) accordingly. For deterministic algorithm, we compute conditional expectations \(X(I,J)\) for including  and excluding element \(i\), and the smaller choice between the two tells whether to pick \(i\) for that iteration. 

We now state and prove our efficient implementation results.\begin{theorem}
\label{thm:effic-randomized-l-legularized}
Let \(n,d,k\) be positive integers, $z \in\R_{+}^n$, \(\U\in\{ \U_k,\U_{\leq k} \} \), $V=[v_1,\ldots,v_n]\in\R^{d \times n}$, and \(0\leq l' < l \leq d\) be a pair of integers. Let \(\mu'\) be the \(\lambda\)-regularized proportional \(l\)-volume sampling distribution over \(\U\): \(\mu'(S)\propto z^SE_l\pr{V_SV_S^\top+\lambda I_d}\) for all \(S\in \U\). There are

\begin{itemize}
\item {}an implementation to sample from \(\mu'\) and
\item a deterministic algorithm that outputs a set $S^*\in \U$  such that
\begin{equation}
\left(
\frac{E_{l'}(V_{S^*} V_{S^*}^\top+\lambda I_d)}{E_l(V_{S^*} V_{S^*}^\top+\lambda I_d)} \right)^{\frac{1}{l-l'}}\geq \Expectation{\Sran\sim\mu'}{\left(\frac{E_{l'}(V_\Sran V_\Sran^\top+\lambda I_d)}{E_l(V_\Sran V_\Sran^\top+\lambda I_d)}\right)^{\frac{1}{l-l'}}}. \label{eq:deterministic-gen-ineq-lambda}
\end{equation}
\end{itemize}
Both algorithms  run in \(O\pr{n^{1+\omega}lk^2\log(lk)}\) number of arithmetic operations.
\end{theorem}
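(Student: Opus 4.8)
The plan is to reduce the whole computation to the unregularized case handled by Theorem~\ref{thm:runtime-ori}, by expanding the regularized elementary symmetric polynomials. The key algebraic fact is a generalization of Lemma~\ref{lem:expand-det-(d-1)-regularizer}: for any PSD matrix $X\in\R^{d\times d}$, any $a\ge 0$, and $0\le l\le d$, one has $E_l(X+aI_d)=\sum_{i=0}^{l}\binom{d-i}{l-i}\,E_i(X)\,a^{l-i}$. This follows by the same argument as Lemma~\ref{lem:expand-det-(d-1)-regularizer}: write $E_l(X+aI_d)=\sum_{|T|=l}\prod_{j\in T}(\lambda_j+a)$ for $\lambda$ the eigenvalues of $X$, expand each product into $\sum_{U\subseteq T}\lambda^{U}a^{l-|U|}$, and count that a fixed $U$ with $|U|=i$ lies in exactly $\binom{d-i}{l-i}$ subsets $T$ of size $l$. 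Applying this with $X=V_SV_S^\top$, $a=\lambda$, and using $E_i(V_SV_S^\top)=E_i(V_S^\top V_S)$ (the two matrices share the same nonzero eigenvalues, and padding with zeros does not change $e_i$), the sampling weight becomes $z^S E_l(V_SV_S^\top+\lambda I_d)=\sum_{i=0}^{l}\binom{d-i}{l-i}\lambda^{l-i}\,z^S E_i(V_S^\top V_S)$, a fixed nonnegative linear combination of precisely the quantities that Lemma~\ref{sumofProductDet} evaluates.

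For the sampling algorithm I would run the same iterative marginal scheme as in Theorem~\ref{thm:runtime-ori}. For disjoint $I,J\subseteq[n]$, the unnormalized conditional mass $\sum_{S\in\U,\,I\subseteq S,\,J\cap S=\emptyset}z^S E_l(V_SV_S^\top+\lambda I_d)$ expands, via the identity above, into $\sum_{i=0}^{l}\binom{d-i}{l-i}\lambda^{l-i}\sum_{S\in\U,\,I\subseteq S,\,J\cap S=\emptyset}z^S E_i(V_S^\top V_S)$; summing the inner sums over all relevant sizes $k_0$ (which covers both $\U=\U_k$ and $\U=\U_{\le k}$), each is an instance of Lemma~\ref{sumofProductDet}. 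Hence the marginal $P(I,J)=\Prob{\Sran\sim\mu'}{i\in\Sran\mid I\subseteq\Sran,\ J\cap\Sran=\emptyset}$ is a ratio of such computable sums, and sampling elements one at a time while updating $(I,J)$ produces a draw from $\mu'$.

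For the deterministic algorithm the one genuinely new point is the exponent $\tfrac1{l-l'}$, since $\mathbb{E}_{\Sran\sim\mu'}\big[(E_{l'}(Z)/E_l(Z))^{1/(l-l')}\big]$ (writing $Z=V_\Sran V_\Sran^\top+\lambda I_d$) is not a ratio of polynomials in $z$. I would handle this exactly as in \cite{nikolov2019proportional}: because $t\mapsto t^{1/(l-l')}$ is concave and increasing on $\R_{\ge 0}$, it suffices to output $S^*\in\U$ with $\frac{E_{l'}(V_{S^*}V_{S^*}^\top+\lambda I_d)}{E_l(V_{S^*}V_{S^*}^\top+\lambda I_d)}\ge\mathbb{E}_{\Sran\sim\mu'}\big[\frac{E_{l'}(V_\Sran V_\Sran^\top+\lambda I_d)}{E_l(V_\Sran V_\Sran^\top+\lambda I_d)}\big]$, since raising to the power $\tfrac1{l-l'}$ and applying Jensen then gives \eqref{eq:deterministic-gen-ineq-lambda}. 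Crucially, the factor $E_l(V_SV_S^\top+\lambda I_d)$ in the definition of $\mu'$ cancels the denominator, so for disjoint $I,J$ this conditional expectation equals $\big(\sum_{S:\,I\subseteq S,\,J\cap S=\emptyset}z^S E_{l'}(V_SV_S^\top+\lambda I_d)\big)\big/\big(\sum_{S:\,I\subseteq S,\,J\cap S=\emptyset}z^S E_l(V_SV_S^\top+\lambda I_d)\big)$, and expanding both $E_{l'}$ and $E_l$ by the binomial identity makes numerator and denominator nonnegative combinations of Lemma~\ref{sumofProductDet} sums. Choosing each element by the method of conditional expectations (take whichever of ``include $i$'' or ``exclude $i$'' keeps this quantity no larger) yields the desired $S^*$.

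Finally, the running time matches Theorem~\ref{thm:runtime-ori} with no loss: a single call to Lemma~\ref{sumofProductDet} already returns the sums $\sum z^S E_{d_0}(V_S^\top V_S)$ for all $d_0=0,\dots,l$ and all sizes $k_0$ simultaneously, so reconstituting the regularized sums only requires taking fixed linear combinations with the coefficients $\binom{d-i}{l-i}\lambda^{l-i}$, which is lower-order; thus $O(n^{1+\omega}lk^2\log(lk))$ carries over verbatim. I expect no single step to be a serious obstacle; the care goes into (i) verifying the binomial-expansion identity and the equality $E_i(V_SV_S^\top)=E_i(V_S^\top V_S)$ so that Lemma~\ref{sumofProductDet} literally applies, (ii) the size-constraint bookkeeping distinguishing $\U_k$ from $\U_{\le k}$, and (iii) the concavity/Jensen reduction that removes the fractional power in the deterministic case.
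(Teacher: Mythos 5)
Your proposal follows essentially the same route as the paper's proof: expand $E_l(V_SV_S^\top+\lambda I_d)$ as a fixed nonnegative combination of the $E_i(V_S^\top V_S)$, feed these into Lemma~\ref{sumofProductDet}, and rerun the iterative marginal-sampling and conditional-expectation schemes of Theorem~\ref{thm:runtime-ori}; you in fact supply the explicit coefficients $\binom{d-i}{l-i}\lambda^{l-i}$ that the paper only alludes to as ``a simple counting argument,'' and you spell out the Jensen/monotonicity reduction for the fractional exponent that the paper leaves implicit. One small internal slip: to end with a set whose ratio is $\geq$ the expectation, as inequality~\eqref{eq:deterministic-gen-ineq-lambda} is stated, your greedy step should take the branch that keeps the conditional expectation \emph{no smaller}, not ``no larger''; with that sign fixed the argument is exactly right.
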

\begin{proof}
The argument,  similarly to the  proof of Theorem \ref{thm:runtime-ori}, reduces what we need to prove to the ability to efficiently compute marginal probability \eqref{eq:mar-prob} and conditional expectation \eqref{eq:con-ex}.
\
For ease of exposition, we first focus on the marginal probability and \(l=d\). Let \(I'= I \cup \{i\}\). The marginal probability equals to
\begin{align*}
\Prob{\Sran\sim \mu'}{i\in\altmathcal{S}|I\subseteq \Sran, J\cap \Sran=\emptyset}  &= \frac{\Prob{\Sran\sim\mu'}{I'\subseteq \Sran, J\cap \Sran=\emptyset} }{\Prob{\Sran\sim\mu'}{I\subseteq \Sran, J\cap \Sran=\emptyset} }  \\
&= \frac{\sum_{S \in \U,I'\subseteq S,J\cap S=\emptyset}z^{S} \det(V_SV_S^\top+\lambda I_d)}{\sum_{S \in \U,I\subseteq S,J\cap S=\emptyset}z^{S} \det(V_SV_S^\top+\lambda I_d) } \\
&= \frac{\sum_{S \in \U,I'\subseteq S,J\cap S=\emptyset}z^{S}\sum_{h=0}^d \lambda^{d-h}E_h(V_SV_S^\top)}{\sum_{S \in \U,I\subseteq S,J\cap S=\emptyset}z^{S}\sum_{h=0}^d \lambda^{d-h}E_h(V_SV_S^\top)} \\
&= \frac{\sum_{h=0}^d \lambda^{d-h} \sum_{S \in \U,I'\subseteq S,J\cap S=\emptyset}z^{S}E_h(V_SV_S^\top)}{\sum_{h=0}^d \lambda^{d-h}\sum_{S \in \U,I\subseteq S,J\cap S=\emptyset}z^{S}E_h(V_SV_S^\top)}
\end{align*}
where we apply Lemma \ref{lem:expand-det-(d-1)-regularizer} and the Cauchy-Binet formula in the third equality. Both the numerator and denominator are sums over terms in the form \( \sum_{S \in \U, A\subseteq S,J\cap S=\emptyset}z^{S}E_h(V_SV_S^\top)\) for some set \(A\subseteq \U\) and \(h=0,1,\ldots,d\), which by Lemma \ref{sumofProductDet}, can be simultaneously  computed in \(O\pr{n^\omega dk |I|\cdot\log(dk|I|)}\) number of arithmetic operations. (If \(\U=\U_k\), we use Lemma \ref{sumofProductDet} with \(k_0=k\); else if \(\U=\U_{\leq k}\), we use Lemma \ref{sumofProductDet} with all    \(k_0=1,2,\ldots k\).) This runtime is the bottleneck in each of the \(n\) sampling steps, and hence the total runtime is \(O\pr{n^{1+\omega}k^2\log(dk)}\) number of arithmetic operations. 

The key idea in the above argument is in \(\det(V_SV_S^\top+\lambda I)=\sum_{h=0}^d \lambda^{d-h}E_h(V_SV_S^\top)\) in the third equality above, where we expand \(\det(V_SV_S^\top+\lambda I)\) in terms of elementary symmetric polynomials of eigenvalues of \(V_SV_S^\top\). For \(l<d\), the similar argument holds because \(E_l(V_SV_S^\top+\lambda I)\) can still be written as the sum over \(E_h(V_SV_S^\top)\) for several values of \(h\leq l\) (the coefficient may be different from the case \(l=d\), but can be found by a simple counting argument).

We can similarly calculate the conditional expectation and get

\begin{equation}
X(I,J)= \frac{\sum_{S \in \U,I\subseteq S,J\cap S=\emptyset}z^{S} E_{l'}( V_SV_S^\top+\lambda I_d)}{\sum_{S \in \U,I\subseteq S,J\cap S=\emptyset}z^SE_l(V_{S'}V_{S'}^\top+\lambda I_d)}.
\end{equation}
The rest of the proof follows similarly by expanding each elementary symmetric polynomial of eigenvalues of \( V_SV_S^\top+\lambda I_d\) as the sum of elementary symmetric polynomials of eigenvalues of \( V_SV_S^\top\).
  \end{proof}

\section*{Acknowledgement}
The author thanks Mohit Singh (Georiga Institute of Technology) for helpful discussions in the completion of this work.
\bibliographystyle{plainnat}
\bibliography{references}

\begin{thebibliography}{12}
\providecommand{\natexlab}[1]{#1}
\providecommand{\url}[1]{\texttt{#1}}
\expandafter\ifx\csname urlstyle\endcsname\relax
  \providecommand{\doi}[1]{doi: #1}\else
  \providecommand{\doi}{doi: \begingroup \urlstyle{rm}\Url}\fi

\bibitem[Allen-Zhu et~al.(2020)Allen-Zhu, Li, Singh, and Wang]{allen2020near}
Zeyuan Allen-Zhu, Yuanzhi Li, Aarti Singh, and Yining Wang.
\newblock Near-optimal discrete optimization for experimental design: A regret
  minimization approach.
\newblock \emph{Mathematical Programming}, pages 1--40, 2020.

\bibitem[Avron and Boutsidis(2013)]{avron2013faster}
Haim Avron and Christos Boutsidis.
\newblock Faster subset selection for matrices and applications.
\newblock \emph{SIAM Journal on Matrix Analysis and Applications}, 34\penalty0
  (4):\penalty0 1464--1499, 2013.

\bibitem[Derezinski and Warmuth(2018{\natexlab{a}})]{derezinski2017subsampling}
Michal Derezinski and Manfred Warmuth.
\newblock Subsampling for ridge regression via regularized volume sampling.
\newblock In \emph{International Conference on Artificial Intelligence and
  Statistics}, pages 716--725, 2018{\natexlab{a}}.

\bibitem[Derezinski and Warmuth(2018{\natexlab{b}})]{derezinski2018reverse}
Michal Derezinski and Manfred~K Warmuth.
\newblock Reverse iterative volume sampling for linear regression.
\newblock \emph{The Journal of Machine Learning Research}, 19\penalty0
  (1):\penalty0 853--891, 2018{\natexlab{b}}.

\bibitem[Fedorov(1972)]{fedorov1972theory}
Valerii~Vadimovich Fedorov.
\newblock \emph{Theory of optimal experiments}.
\newblock Elsevier, 1972.

\bibitem[Hastie et~al.(2009)Hastie, Tibshirani, and
  Friedman]{hastie2009elements}
Trevor Hastie, Robert Tibshirani, and Jerome Friedman.
\newblock \emph{The elements of statistical learning: data mining, inference,
  and prediction}.
\newblock Springer Science \& Business Media, 2009.

\bibitem[Hoerl and Kennard(1970)]{hoerl1970ridge}
Arthur~E Hoerl and Robert~W Kennard.
\newblock Ridge regression: Biased estimation for nonorthogonal problems.
\newblock \emph{Technometrics}, 12\penalty0 (1):\penalty0 55--67, 1970.

\bibitem[Le~Gall(2014)]{le2014powers}
Fran{\c{c}}ois Le~Gall.
\newblock Powers of tensors and fast matrix multiplication.
\newblock In \emph{Proceedings of the 39th international symposium on symbolic
  and algebraic computation}, pages 296--303, 2014.

\bibitem[Madan et~al.(2019)Madan, Singh, Tantipongpipat, and
  Xie]{madan2019combinatorial}
Vivek Madan, Mohit Singh, Uthaipon Tantipongpipat, and Weijun Xie.
\newblock Combinatorial algorithms for optimal design.
\newblock In \emph{Conference on Learning Theory}, pages 2210--2258, 2019.

\bibitem[Marquardt and Snee(1975)]{marquardt1975ridge}
Donald~W Marquardt and Ronald~D Snee.
\newblock Ridge regression in practice.
\newblock \emph{The American Statistician}, 29\penalty0 (1):\penalty0 3--20,
  1975.

\bibitem[Nikolov et~al.(2019)Nikolov, Singh, and
  Tantipongpipat]{nikolov2019proportional}
Aleksandar Nikolov, Mohit Singh, and Uthaipon~Tao Tantipongpipat.
\newblock Proportional volume sampling and approximation algorithms for
  {A}-optimal design.
\newblock In \emph{Proceedings of the Thirtieth Annual ACM-SIAM Symposium on
  Discrete Algorithms}, pages 1369--1386. SIAM, 2019.

\bibitem[Wang et~al.(2017)Wang, Yu, and Singh]{wang2017computationally}
Yining Wang, Adams~Wei Yu, and Aarti Singh.
\newblock On computationally tractable selection of experiments in
  measurement-constrained regression models.
\newblock \emph{The Journal of Machine Learning Research}, 18\penalty0
  (1):\penalty0 5238--5278, 2017.

\end{thebibliography}

\appendix
\section{Comparison of Our Bound with \(\lambda\)-Regularized  Volume Sampling}\label{sec:compare}
\citet{derezinski2017subsampling} introduced \textit{\(\lambda\)-regularized  volume sampling}, where we sample a set \(S\subseteq[n]\) of size \(k\) with probability proportional to \(\det(V_SV_S^\top+\lambda I_d)\). They show that for \(\Cov{\eta} \preceq \sigma^2 I\) and \(\lambda\leq \frac{\sigma^2}{\|w^*\|^2}\),
over the expectation of the sampling,\begin{align}
\Ex{\Sran}{\tr\pr{V_{\Sran}V_{\Sran}^\top+ \lambda I}^{-1}} \leq
  \frac{\sigma^2n\,\tr((VV^\top\!\!+\!\lambda I)^{-1})}{k-d_\lambda+1} \label{eq:ridge-old-work-bound}
\end{align}
where \(d_\lambda
  =\tr(V^\top(VV^\top\!\!+\lambda I)^{-1}V)\) \cite{derezinski2017subsampling,derezinski2018reverse}. For \(\lambda=0\), we have \(d_\lambda=d\), and \(d_\lambda\) decreases as \(\lambda\) increases.

The bound \eqref{eq:ridge-old-work-bound} is different from our goal of approximation ratio in this work. Indeed, suppose that  \(S^*\) is an optimal subset of the problem, then in expectation over the run of our Algorithm \ref{alg:generalSampleNearInd},
\begin{align}
\Ex{\Sran}{\tr\pr{V_{\Sran}V_{\Sran}^\top + \lambda I}^{-1}} \leq\pr{1+c\frac{d-1}{(k-d+1)\sqrt{1+\frac{\lambda}{\norm{V(x)V(x)}{2}}}}}
  \sigma^2\tr((V_{S^*}V_{S^*}^\top+\lambda I)^{-1}) \label{eq:ridge-ours-bound}
\end{align}
for some fixed constant \(c\) (we assume  \(d\) is large compared to \(\frac 1\epsilon\) so that \(\frac{d}{\epsilon}+\frac{\log(1/\epsilon)}{\epsilon^2}=O\pr{\frac{d}{\epsilon}}\)). When \(\lambda=0\), our bound \eqref{eq:ridge-ours-bound} simplifies to a bound similar to \eqref{eq:ridge-old-work-bound}:  
\begin{align*}
\Ex{\Sran}{\tr\pr{V_{\Sran}V_{\Sran}^\top + \lambda I}^{-1}} \leq
  \frac{\sigma^2k\,\tr((V_{S^*}V_{S^*}\!\!+\!\lambda I)^{-1})}{k-d_\lambda+1}.
\end{align*}

The main difference between our guarantee and  ones by \cite{derezinski2017subsampling,derezinski2018reverse}  is that ours is in comparison to the best possible subset \(S^*\), whereas \eqref{eq:ridge-old-work-bound} compares the performance to labelling the whole original data set.
Hence, the bound by \cite{derezinski2017subsampling} in worst case may suffer approximation ratio up to an additional factor \(n/k\). 

\section{Calculation of Errors from the Ridge Regression Estimate} \label{sec:cal-error}

\subsection{Calculations in Table   \ref{tab:gau-noise-reg-pred-err}}

The calculations used to obtain distributions in Table   \ref{tab:gau-noise-reg-pred-err} are similar in each of four cases. Here we will compute only one of those, \(X^\top \pr{\hat{w}_S(\lambda) - w^*}\), as steps in its derivation and the result imply the other three. We first state a simple claim that will help in this calculation.
\begin{claim} \label{clm:gau-lin}
For a fixed matrix \(A\) and a random vector \(Z\), we have \(\Cov{AZ}=A\Cov{Z}A^\top\).
\end{claim}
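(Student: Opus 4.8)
The plan is to unwind the definition of the covariance matrix and then push the deterministic matrix $A$ through the expectation using linearity. Recall that for a random vector $Y$, $\Cov{Y} = \E{(Y-\E{Y})(Y-\E{Y})^\top}$, where the expectation of a matrix-valued random variable is understood entrywise.

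First I would set $Y = AZ$. Since $A$ is fixed and expectation is linear, $\E{Y} = \E{AZ} = A\E{Z}$, so the centered vector is $Y - \E{Y} = A(Z - \E{Z})$. Substituting into the outer-product form gives
\[
(Y-\E{Y})(Y-\E{Y})^\top = A(Z-\E{Z})(Z-\E{Z})^\top A^\top .
\]
Taking expectations on both sides and moving the constant matrices $A$ and $A^\top$ outside the expectation yields $\Cov{AZ} = A\,\E{(Z-\E{Z})(Z-\E{Z})^\top}\,A^\top = A\,\Cov{Z}\,A^\top$, as claimed.

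There is no real obstacle in this argument; the only point worth stating explicitly is that a deterministic matrix can be pulled out of the expectation of a matrix-valued random variable, which is just linearity of expectation applied to each coordinate of the product (every entry of $A(Z-\E{Z})(Z-\E{Z})^\top A^\top$ is a fixed linear combination of the entries of the random matrix $(Z-\E{Z})(Z-\E{Z})^\top$). The claim is then invoked in the following subsection with $A = X^\top$ and $Z = \hat{w}_S(\lambda) - w^*$ to read off the covariance of the prediction error $X^\top(\hat{w}_S(\lambda) - w^*)$ from that of the model error $\hat{w}_S(\lambda) - w^*$.
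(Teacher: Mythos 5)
Your proof is correct and follows exactly the same route as the paper's: center $AZ$ using $\E{AZ}=A\E{Z}$, expand the outer product, and pull the deterministic matrices out of the expectation by linearity. The only difference is that you explicitly justify the last step entrywise, which the paper leaves implicit.
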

\begin{proof}
Denote \(m=\E{Z}\), the mean vector of \(Z\). Then, the mean of \(AZ\) is \(Am\). We now have
\begin{align*}
\Cov{AZ}&=\E{(AZ-Am)(AZ-Am)^\top} \\
&=\E{A(Z-m)(Z-m)^\top A^\top} \\
&=A\E{(Z-m)(Z-m)^\top}A^\top \\
&=A\Cov{Z}A^\top
\end{align*}
as desired.\end{proof}
We now show how to obtain the distribution of  \(X^\top \pr{\hat{w}_S(\lambda) - w^*}\).
\begin{claim}
We have \[X^\top \pr{\hat{w}_S(\lambda) - w^*}=\NN(-\lambda X^\top  Z_S(\lambda) ^{-1} w^*,\sigma^2 X^\top \br{ Z_S(\lambda) ^{-1} - \lambda Z_S(\lambda) ^{-2} } X) \]
\end{claim}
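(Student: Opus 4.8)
The plan is to write $\hat{w}_S(\lambda)$ in closed form, substitute the linear model $y_S = V_S^\top w^* + \eta_S$, and then read off the distribution of the affine image under $X^\top$ using Claim~\ref{clm:gau-lin}. First I would derive the normal equations for ridge regression: setting the gradient of $\norm{y_S - V_S^\top w}{2}^2 + \lambda\norm{w}{2}^2$ to zero gives $(V_SV_S^\top + \lambda I_d)w = V_S y_S$, i.e.\ $\hat{w}_S(\lambda) = Z_S(\lambda)^{-1} V_S y_S$ (here $Z_S(\lambda)$ is invertible since $\lambda \geq 0$ and, if $\lambda = 0$, on the range where the estimator is well-defined). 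Substituting $y_S = V_S^\top w^* + \eta_S$ yields
\[
\hat{w}_S(\lambda) = Z_S(\lambda)^{-1} V_S V_S^\top w^* + Z_S(\lambda)^{-1} V_S \eta_S.
\]

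The key algebraic observation is that $V_S V_S^\top = Z_S(\lambda) - \lambda I_d$, so that $Z_S(\lambda)^{-1} V_S V_S^\top = I_d - \lambda Z_S(\lambda)^{-1}$, and hence
\[
\hat{w}_S(\lambda) - w^* = -\lambda Z_S(\lambda)^{-1} w^* + Z_S(\lambda)^{-1} V_S \eta_S.
\]
Multiplying by $X^\top$ exhibits $X^\top(\hat{w}_S(\lambda) - w^*)$ as a fixed vector plus the linear image $X^\top Z_S(\lambda)^{-1} V_S \eta_S$ of the Gaussian vector $\eta_S \sim \NN(0,\sigma^2 I_k)$; therefore it is Gaussian. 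Its mean is $-\lambda X^\top Z_S(\lambda)^{-1} w^*$ since $\eta_S$ is centered.

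For the covariance I would apply Claim~\ref{clm:gau-lin} with $A = X^\top Z_S(\lambda)^{-1} V_S$ and $Z = \eta_S$, giving
\[
\Cov{X^\top(\hat{w}_S(\lambda) - w^*)} = \sigma^2\, X^\top Z_S(\lambda)^{-1} V_S V_S^\top Z_S(\lambda)^{-1} X.
\]
Then I would again use $V_S V_S^\top = Z_S(\lambda) - \lambda I_d$ to simplify $Z_S(\lambda)^{-1} V_S V_S^\top Z_S(\lambda)^{-1} = Z_S(\lambda)^{-1} - \lambda Z_S(\lambda)^{-2}$, which gives exactly the claimed covariance $\sigma^2 X^\top[Z_S(\lambda)^{-1} - \lambda Z_S(\lambda)^{-2}]X$. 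There is no real obstacle here; the only thing to be careful about is the repeated use of the identity $V_SV_S^\top = Z_S(\lambda) - \lambda I_d$ (and noting that $Z_S(\lambda)$ commutes with its own inverse) to collapse the expressions into the stated form. Setting $X = I_d$ recovers the model-error row of Table~\ref{tab:gau-noise-reg-pred-err}, and $\lambda = 0$ recovers the unregularized rows, so the same computation yields all four cases.
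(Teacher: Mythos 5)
Your proposal is correct and follows essentially the same route as the paper: closed-form ridge solution $\hat{w}_S(\lambda)=Z_S(\lambda)^{-1}V_Sy_S$, substitution of the linear model, the identity $V_SV_S^\top=Z_S(\lambda)-\lambda I_d$ to obtain $\hat{w}_S(\lambda)-w^*=-\lambda Z_S(\lambda)^{-1}w^*+Z_S(\lambda)^{-1}V_S\eta_S$, and Claim~\ref{clm:gau-lin} to read off the Gaussian mean and covariance. No gaps.
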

\begin{proof}
We split the calculation into the following steps.
\begin{enumerate}
\item 
We find the closed-form solution of \(\hat{w}_S(\lambda)\) (e.g. by taking the gradient and set the squared difference to zero) to get \begin{equation}
\hat{w}_S(\lambda) =  Z_S(\lambda) ^{-1} V_S y_S.
\end{equation}
\item
Substituting \(y_i\) from the linear model assumption, we obtain the distribution of the model error as follows.\begin{align*}
\hat{w}_S(\lambda)-w^*&= Z_S(\lambda) ^{-1} V_S y_S-w^* \\
&= Z_S(\lambda) ^{-1} V_S \pr{V_S^\top w^*+\eta_S}-w^* \\
&=  Z_S(\lambda) ^{-1} \br{  Z_S(\lambda)  w^*-(\lambda I)w^*+V_S \eta_S }- w^* \\
&=-\lambda  Z_S(\lambda) ^{-1} w^* +   Z_S(\lambda) ^{-1}  V_S \eta_S.  
\end{align*}
\item
To obtain the distrubtion of the prediction error, we simply left-multiply the above equality by matrix \(X\):
\begin{align}
X^\top \pr{\hat{w}_S(\lambda) - w^*}= -\lambda X^\top  Z_S(\lambda) ^{-1} w^* +X^\top    Z_S(\lambda) ^{-1}  V_S \eta_S. \label{eq:dis-pred}
\end{align}
\item
A linear transformation of a random Gaussian vector is (multi-variate)\ Gaussian, so \eqref{eq:dis-pred} is also Gaussian. We can calculate the mean of  \eqref{eq:dis-pred} as
\begin{equation}
\mu_{X^\top \pr{\hat{w}_S(\lambda) - w^*}} = -\lambda X^\top  Z_S(\lambda) ^{-1} w^* 
\end{equation}
and
the covariance of  \eqref{eq:dis-pred} as\begin{align*}
\Cov{X^\top \pr{\hat{w}_S(\lambda) - w^*}} &=X^\top    Z_S(\lambda) ^{-1}  V_S \Cov{\eta_S} \pr{X^\top    Z_S(\lambda) ^{-1}  V_S }^\top \\
&=X^\top    Z_S(\lambda) ^{-1}  V_S \Cov{\eta_S} V_S^\top   Z_S(\lambda) ^{-1} X \\
&= \sigma^2 X^\top    Z_S(\lambda) ^{-1}  V_S V_S^\top   Z_S(\lambda) ^{-1} X    \\
 &=\sigma^2 X^\top \br{ Z_S(\lambda) ^{-1} - \lambda Z_S(\lambda) ^{-2} } X
\end{align*}
where we use  Claim \ref{clm:gau-lin} for the first equality. We note that  if \(\Cov{\eta}\preceq \sigma^2 I_n\) instead of \(\Cov{\eta}= \sigma^2 I_n\), the third equality is replaced by "\(\preceq\)", so the errors we need to bound in this work is no more than those when \(\Cov{\eta}= \sigma^2 I_n\). 
\end{enumerate}
\end{proof}
\subsection{Calculations in Table \ref{tab:sqr-loss-reg-pred-err}}
We use the notation \((x)_i\) to denote the \(i\)th coordinate of the vector \(x\). First, we calculate expected squared distance of the model (or predictor)\ error:
 \begin{align*}
\Ex{\eta}{\norm{\hat{w}_S(\lambda) - w^*}{2}^2} &= \sum_{i=1}^d \Ex{\eta}{\pr{(\hat{w}_S(\lambda))_i - (w^*)_i}^2} \\
&= \sum_{i=1}^d \pr{ \Ex{\eta}{(\hat{w}_S(\lambda))_i - (w^*)_i}^2 + \Var{(\hat{w}_S(\lambda))_i - (w^*)_i} }\\
&=\norm{\Ex{\eta}{{\hat{w}_S(\lambda) - w^*}}}{2}^{2} + \tr{\Cov{\hat{w}_S(\lambda) - w^*}}
\end{align*}
where we use \(\Ex{}{X^2}=\Ex{}{X}^2+\Var{X} \) (bias-variance decomposition). Similarly, for prediction error, \begin{align*}
\Ex{\eta}{\norm{X^\top\pr{\hat{w}_S(\lambda) - w^*}}{2}^2} &= \sum_{i=1}^d \Ex{\eta}{\norm{(X^\top\pr{\hat{w}_S(\lambda))_i - (w^*)_i}}{2}^2} \\
&= \sum_{i=1}^d \pr{ \Ex{\eta}{X^\top\pr{(\hat{w}_S(\lambda))_i - (w^*)_i}}^2 + \Var{X^\top \pr{(\hat{w}_S(\lambda))_i - (w^*)_i} }} \\
&=\norm{\Ex{\eta}{{X^\top\pr{\hat{w}_S(\lambda) - w^*}}}}{2}^2+ \tr{\Cov{X^\top\pr{\hat{w}_S(\lambda) - w^*}}.}
\end{align*}
As we know the mean and variance of the distributions of model and prediction errors (summarized in Table \ref{tab:gau-noise-reg-pred-err}), we can substitute those means and variances to obtain
\begin{align*}
\Ex{\eta}{\norm{\hat{w}_S(\lambda) - w^*}{2}^2}&=\norm{-\lambda Z_S(\lambda) ^{-1} w^*}{2}^2+ \tr{ \sigma^2 \br{ Z_S(\lambda) ^{-1} - \lambda Z_S(\lambda) ^{-2} }} \\
&= \lambda^2 \an{ Z_S(\lambda) ^{-2},w^* {w^*}^\top}+\sigma^2 \tr{ Z_S(\lambda) ^{-1}}-\lambda \sigma^2\tr{ Z_S(\lambda) ^{-2}} \\
&= \sigma^2 \tr{ Z_S(\lambda) ^{-1}}-\lambda\an{ Z_S(\lambda) ^{-2}, \sigma^2I-\lambda w^* {w^*}^\top}
\end{align*}
 and\begin{align*}
\Ex{\eta}{\norm{X^\top\pr{\hat{w}_S(\lambda) - w^*}}{2}^2} &=\norm{-\lambda X^\top  Z_S(\lambda) ^{-1} w^*}{2}^2+ \tr{ \sigma^2 X^\top \br{ Z_S(\lambda) ^{-1} - \lambda Z_S(\lambda) ^{-2} } X} \\
&= \lambda^2 \an{ Z_S(\lambda) ^{-1}X X^\top Z_S(\lambda)^{-1}   ,w^* {w^*}^\top}+\sigma^2 \tr{ X^\top Z_S(\lambda) ^{-1}X}-\lambda\sigma^2 \tr{X^\top Z_S(\lambda) ^{-2} X} \\
&=\sigma^2 \tr{ X^\top Z_S(\lambda) ^{-1}X} -\lambda\an{ Z_S(\lambda) ^{-1}X X^\top Z_S(\lambda)^{-1}   ,\sigma^2I-\lambda w^* {w^*}^\top}.
\end{align*}

Note that, similar to \eqref{eq:model-error-final-bound}, if we assume that 
\( \lambda \leq \frac{\sigma^2}{\norm{w^*}{2}^2}\), then we have\begin{align}
\Ex{\eta}{\norm{X^\top\pr{\hat{w}_S(\lambda) - w^*}}{2}^2} \leq\sigma^2 \tr{ X^\top Z_S(\lambda) ^{-1}X,}  \label{eq:prediction-error-final-bound}
\end{align}
the prediction-error version of the model-(or predictor-)error bound   \eqref{eq:model-error-final-bound}.
\end{document}